\def\ps@headings{%
\def\@oddhead{\mbox{}\scriptsize\rightmark \hfil \thepage}%
\def\@evenhead{\scriptsize\thepage \hfil \leftmark\mbox{}}%
\def\@oddfoot{}%
\def\@evenfoot{}}
\newtheorem{theorem}{Theorem}
\newtheorem{lemma}{Lemma}
\newcommand{\refeq}[1]{(\ref{#1})}
\newcommand{\mc}[1]{\mathcal{#1}}
\title{Max-Flow Protection using Network Coding}
\author{Osameh M. Al-Kofahi\hspace{1 in}Ahmed E. Kamal\\Department of Electrical and Computer Engineering, Iowa State University, Ames, IA 50010}
\begin{document}
\maketitle

\begin{abstract}

In any communication network, the maximum number of link-disjoint paths between any pair of communicating nodes, S and T, is limited by the S-T minimum link-cut. Multipath routing protocols have been proposed in the literature to make use of these S-T paths in enhancing the survivability of the S-T information flow. This is usually accomplished by using a subset of these paths to forward redundant data units or combinations (if network coding is allowed) from S to T. Therefore, this enhancement in survivability reduces the useful S-T information rate. In this paper we present a new way to enhance the survivability of the S-T information flow without compromising the maximum achievable S-T information rate. To do this, bottleneck links (in the min-cut) should only forward useful information, and not redundant data units. We introduce the idea of extra source or destination connectivity with respect to a certain S-T max-flow, and then we study two problems: namely, pre-cut protection and post-cut protection. Although our objective in both problems is the same, where we aim to maximize the number of protected paths, our analysis shows that the nature of these two problems are very different, and that the pre-cut protection problem is much harder. Specifically, we prove the hardness of the pre-cut protection problem, formulate it as an integer linear program, and propose a heuristic approach to solve it. Simulations show that the performance of the heuristic is acceptable even on relatively large networks. In the post-cut problem we show that all the data units, forwarded by the min-cut edges not incident to T, can be post-cut-protected.
\end{abstract}

\section{Introduction}
\label{intro}

The survivability of an information flow between two terminal nodes, S and T, can be enhanced by using part of the available network resources (bandwidth) to forward redundant information from S to T. Depending on the used survivability mechanism, the redundant information can be used to recover from data corruption if, for example, a Forward Error Correcting code (FEC) is used, or it can be used to recover from network component failures, if a proactive protection mechanism is used. In proactive protection, traditionally $k$ edge-disjoint S-T paths are used to forward $k$ copies of the same data unit from S to T, which guarantees the successful delivery of data if at most $k-1$ link failures occurred in the network. This is usually accomplished by means of a multipath routing protocol, such as MDVA \cite{SJ01} in wired networks or AOMDV \cite{MS01} in ad hoc wireless networks. The maximum number of edge-disjoint S-T paths is limited by the minimum S-T link-cut, which is defined as the smallest set of links that, when removed, all the S-T paths become disconnected. Let $h$ denote the value of the S-T min-cut. Then, if we want to forward data units from S to T and protect them against $q$ failures, we cannot send more than $k = \lfloor\frac{h}{q+1}\rfloor$ data units since $q+1$ copies of each data unit should be forwarded. 

It is clear that traditional proactive protection approaches are very demanding and waste a lot of resources. Even if $q=1$, at least $50\%$ of the used network resources will be wasted to deliver the redundant information, which reduces the useful S-T information rate by at least $50\%$. Network coding \cite{RR00} can be used to overcome this problem in traditional proactive protection schemes. The basic idea of network coding is that it allows intermediate network nodes to generate combinations from the original data units, instead of just forwarding them as is. Therefore, to recover $k$ data units at the destination node T, $k$ linearly independent combinations in the $k$ data units should be delivered to T. That is, if we want to forward data units from S to T and protect them against $q$ failures, we can send at most $k = h - q$ data units. Note that this is done by designing a network code that creates $k+q$ combinations at intermediate network nodes such that any $k$ of them are solvable, which means that it is enough to receive only $k$ combinations to recover the $k$ data units at T. This simple analysis shows that the useful information rate of network coding-based protection is better than that of traditional protection approaches as long as $h > q + 1$, which is usually the case. Examples of network coding-based protection can be found in \cite{A09,OA09,OA08,AC07}.

Network coding-based protection and traditional protection schemes, provide end-to-end protection of the whole S-T paths used to forward useful data from S to T. In these approaches, the more we enhance the S-T flow survivability, the more we reduce the useful S-T information rate. This is because such approaches treat all network links equally, i.e., bottleneck links (that belong to the min-cut) as well as non-bottlenecks are used to forward redundant data units or combinations. Usually, most of the links in a network are not bottleneck links, which means that link failures are more likely to affect non-bottleneck links than links in the min-cut. Therefore, we can enhance the survivability of the S-T information flow without reducing the useful S-T rate below the max-flow, if we provide protection to the non-bottleneck links only. We call this kind of protection \emph{Max-flow protection} because the max-flow can still be achieved under these conditions as long as no link in the min-cut fails. Note that max-flow protection can be transparently combined with end-to-end protection if needed. In this paper, we focus our analysis on the problem of max-flow protection only, and we do not consider combining it with traditional protection schemes. To the best of our knowledge the problem of max-flow protection has not been studied before. 

The rest of this paper is organized as follows. Section \ref{Sec:prelim} presents the terminology and definitions that will be used throughout the paper. The problems of pre-cut and post-cut protection are presented in Section \ref{Sec:problem}. In Section \ref{Sec:preCut} we study the pre-cut protection problem and prove its hardness. The problem of pre-cut protection is formulated as an Integer Linear Program (ILP) in Section \ref{Sec:ILP}. A 3-phase heuristic approach to solve the pre-cut protection problem is described in Section \ref{Sec:Hrstc}. Section \ref{Sec:postCut} discusses the post-cut protection problem. Finally, Section \ref{Sec:conc} concludes the paper.
%

\section{Preliminaries}
\label{Sec:prelim}

We represent a network by a directed acyclic graph G=(V,E), where V is the set of network nodes and E is the set of available links, where each link is assumed to have unit capacity. The network has a source node (S) that wants to send data to a destination (T), where the S-T max-flow is assumed to be $h$. We assume that a multipath routing protocol is used, e.g., \cite{SJ01} or \cite{MS01}, and the source is fully utilizing the available connectivity by sending $h$ data units to the destination simultaneously. To simplify the analysis, we assume that the network has a single cut. In the rest of this section we define the meaning of extra connectivity with respect to the S-T max-flow. After that we discuss some of the properties of nodes with extra connectivity.

\subsection{Terminology}
Let $f^{(A)}(B)$ denote the max-flow from the nodes in set A to the nodes in set B on a directed graph, which can be calculated by computing the max-flow between a virtual source/sink pair, such that the virtual source is connected to the nodes in A with infinite capacity edges and the virtual sink is connected to the nodes in B with infinite capacity edges also. Let $h = f^S(T) $ be the S-T max-flow. We define the following:

\begin{enumerate}
\item A node with Extra Source Connectivity (wESC) is a node, $u$, that satisfies the following conditions:
\begin{itemize}
\item $f^{S}(u,T) > h$, and $f^{(S,u)}(T) = h$.
\end{itemize}
\item A node with Extra Destination Connectivity (wEDC) is a node, $v$, that satisfies the following conditions:
\begin{itemize}
\item $f^{S}(v,T) = h$, and $f^{(S,v)}(T) > h$.
\end{itemize}
\item A node with No Extra Connectivity (wNEC) has:
\begin{itemize}
\item $f^{S}(v,T) = f^{(S,v)}(T) = h$.
\end{itemize}
\end{enumerate}

Of course, a node with both extra source and extra destination connectivity cannot exist, because this contradicts the assumption that the max-flow equals h. Consider the graph G in Figure \ref{origG}. The S-T max-flow in G is 4, which implies that four data units can be forwarded from S to T on four link-disjoint paths. Assume we found the following paths, $P_1 = \{S \rightarrow A \rightarrow E \rightarrow J \rightarrow T\}$ that forwards data unit $w$, $P_2 = \{S \rightarrow B \rightarrow F \rightarrow G \rightarrow T\}$ that forwards data unit $x$, $P_3 = \{S \rightarrow F \rightarrow H \rightarrow T\}$ that forwards data unit $y$, and  $P_4 = \{S \rightarrow D \rightarrow I \rightarrow T\}$ that forwards data unit $z$. Each path $P_i$ contains a cutting edge $C_i$, which , if deleted, will result in reducing the max-flow by exactly 1 unit of flow because path $P_i$ will be disconnected and cannot be reestablished in any way. In our example, $P_1$ contains $C_1=\{(J,T)\}$, $P_2$ contains $C_2=\{(G,T)\}$, $P_3$ contains $C_3=\{(F,H)\}$, and $P_4$ contains $C_4=\{(I,T)\}$. Note that the min-cut may not always be unique, but in this paper we assume that the graph under consideration has only one cut.

%
%

\subsection{Properties of nodes wESC/wEDC}
\label{SubSec:properties}
Consider a path that contains a node, u, wESC and a node, v, wEDC. Note that node u must be closer (in number of hops on the path) to the source than v, otherwise the max-flow assumption will be contradicted, as shown in Figure \ref{NodesOnTheSamePath}. 

\begin{figure}[tbh]
\centering
\includegraphics*[scale = 0.25]{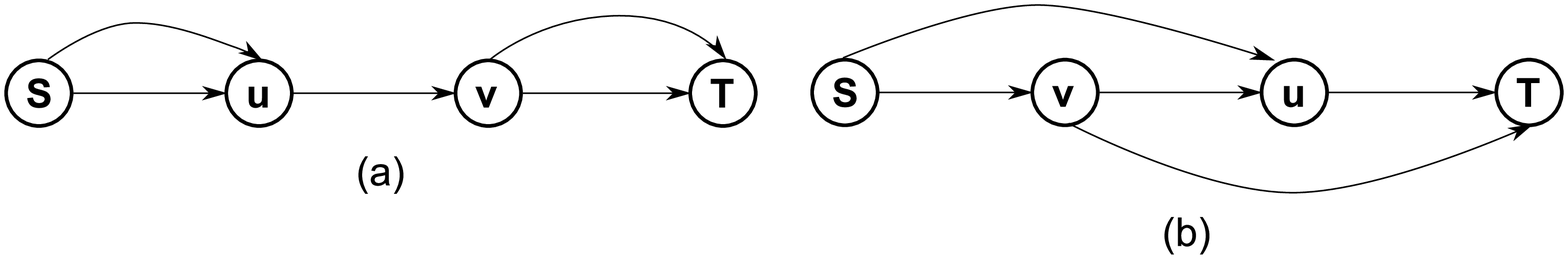}
\caption{In (a) we can say that the S-T max-flow is 1 and that u is a node wESC and v is a node wEDC. However, this is not true in (b), because if  there are two edge-disjoint paths $\{S \rightarrow u \rightarrow T\}$ and $\{S \rightarrow v \rightarrow T\}$, i.e., $u$ is not a node wESC and $v$ is not a node wEDC}
\label{NodesOnTheSamePath}
\end{figure}

In general, removing the min-cut edges (i.e., the edges in $\cup_{i=1}^{h}C_i$) partitions the network into two partitions $A$ and $A'$, such that $S \in A$ and $T \in A'$. Note that, after deleting the min-cut edges, each of the partitions $A$ and $A'$ is a connected component (at least weakly), and that partition $A$ contains nodes wESC, but partition $A'$ contains nodes wEDC. 

\begin{lemma}
Any node $u \in A, u \neq S$ is a node wESC.
\end{lemma}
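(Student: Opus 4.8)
The plan is to verify the two defining conditions of a node wESC directly, exploiting the unique min-cut $C=\cup_{i=1}^{h}C_i$ of capacity $h$ that separates $A\ni S$ from $A'\ni T$, together with the monotonicity of max-flow under enlarging a terminal set: adding a node to the source (or sink) set only attaches infinite-capacity edges in the virtual-source/sink construction, so the flow value cannot drop. It is convenient to take $A$ to be the set of nodes reachable from $S$ by directed paths in $G-C$; with this reading every edge from $A$ to $A'$ lies in $C$, so in $G-C$ no directed path leaves $A$, and in particular $T\in A'$ is unreachable from any node of $A$.

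First I would establish the second condition, $f^{(S,u)}(T)=h$. Since $u\neq S$ lies in $A$, both $S$ and $u$ sit on the source side while $T\in A'$; hence $C$ is itself an edge cut whose removal destroys every directed path from $\{S,u\}$ to $T$ (any such path would have to leave $A$, which is impossible in $G-C$). This gives the upper bound $f^{(S,u)}(T)\le|C|=h$, and monotonicity gives the matching lower bound $f^{(S,u)}(T)\ge f^{S}(T)=h$, so equality holds.

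Next, for the first condition I would prove the strict inequality $f^{S}(u,T)>h$. Monotonicity already yields $f^{S}(u,T)\ge h$, so by integrality (unit capacities) it suffices to rule out $f^{S}(u,T)=h$. Suppose it equals $h$. Then max-flow/min-cut produces an $h$-edge set $C'$ separating $S$ from the combined sink $\{u,T\}$; in particular $C'$ separates $S$ from $T$ using only $h$ edges, so $C'$ is a minimum $S$--$T$ cut, and by the single-cut assumption $C'=C$. This is the contradiction I am after: $C$ cannot separate $S$ from $u$, because $u\in A$ forces a directed $S$--$u$ path lying entirely inside $A$ and therefore avoiding $C$, whereas $C'=C$ was assumed to cut every $S$--$u$ path. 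Hence $f^{S}(u,T)\ge h+1>h$, and combined with the previous paragraph $u$ meets both conditions and is wESC.

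The step I expect to be the main obstacle is precisely the claim that $u\in A$ guarantees a \emph{directed} path from $S$ to $u$ within $A$, since the partition was only declared weakly connected after deleting $C$. I would close this gap by letting $R$ denote the set of nodes reachable from $S$ along directed paths in $G-C$: any edge leaving $R$ in $G$ must belong to $C$, so the out-boundary $\delta^{+}(R)$ is an $S$--$T$ cut contained in $C$; minimality of $C$ forces $|\delta^{+}(R)|\ge h$, while $\delta^{+}(R)\subseteq C$ forces $|\delta^{+}(R)|\le h$, so by uniqueness $\delta^{+}(R)=C$ and $R$ is exactly the source side of the cut. Identifying $A$ with $R$ then makes every $u\in A$ directedly reachable from $S$ inside $A$, which is exactly what the argument above requires, and it simultaneously justifies the ``no path leaves $A$'' facts used in establishing $f^{(S,u)}(T)=h$.
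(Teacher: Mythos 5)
Your proof is correct and, at its core, runs on the same engine as the paper's own proof: a contradiction with the single-min-cut assumption. The paper's argument is a three-sentence sketch --- if $f^{S}(u,T)=h$ then either $u\in A'$ or ``there is another min-cut between $S$ and $u$'' --- and your main step is exactly the rigorous version of that sentence: supposing $f^{S}(u,T)=h$, max-flow/min-cut duality yields an $h$-edge set $C'$ separating $S$ from $\{u,T\}$, which is in particular an $S$-$T$ min-cut, hence equals $C$ by uniqueness, yet $C$ cannot separate $S$ from $u$. Where you genuinely go beyond the paper is twofold. First, you verify the second defining condition of a node wESC, $f^{(S,u)}(T)=h$; the paper negates only the first condition ($f^{S}(u,T)>h$) and silently assumes the second, so strictly speaking its case analysis of ``not wESC'' is incomplete. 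Second, your closing observation about directed reachability repairs a real defect rather than a cosmetic one: the paper defines $A$ as the (weakly connected) component of $S$ after deleting $C$, and under that literal reading the lemma is false --- a node $w$ whose only incident edge points \emph{into} the source-side component is weakly attached to $A$ yet has $f^{S}(w,T)=h$, so it is not wESC. Your identification of $A$ with the set $R$ of nodes directedly reachable from $S$ in $G-C$, justified by $\delta^{+}(R)\subseteq C$ together with $|\delta^{+}(R)|\ge h=|C|$ forcing $\delta^{+}(R)=C$, is precisely the reading under which the lemma (and the paper's subsequent use of $A$ as the pre-cut region) is sound.
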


\begin{proof}
We prove this by contradiction. Let $u \in A, u \neq S$, but $u$ is not a node wESC. Then, $f^S(u,T) = h$, which means that node $u$ cannot receive additional flow from S if the S-T max-flow is established. This implies that either node $u$ is behind the min-cut (i.e., $u \in A'$), which contradicts the starting assumptions, or that there is another min-cut between S and $u$, which contradicts the single min-cut assumption.   
\end{proof}

In a similar fashion, we can prove the following for any node $v \in A', v \neq T$.

\begin{lemma}
Any node $v \in A', v \neq T$ is a node wEDC.
\end{lemma}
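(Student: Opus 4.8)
The plan is to mirror the proof of the previous lemma, exploiting the symmetry between the source-side and destination-side of the cut. Recall that the two lemmas are duals: the first characterizes nodes in partition $A$ as wESC, and this one characterizes nodes in partition $A'$ as wEDC. Since the roles of $S$ and $T$ (and correspondingly of source-connectivity and destination-connectivity) are interchanged, the same contradiction argument should carry over with the appropriate substitutions.

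First I would argue by contradiction. Suppose $v \in A'$ with $v \neq T$, but $v$ is not a node wEDC. By the definition of wEDC, failing to be wEDC means $f^{(S,v)}(T) = h$; that is, adding $v$ as a second source alongside $S$ does not increase the max-flow to $T$ beyond $h$. Intuitively this says that once the S-T max-flow is established, node $v$ cannot inject any additional independent flow toward $T$. Since $v$ lies strictly behind the min-cut (in $A'$), its entire outgoing connectivity to $T$ does not cross the cut again, so one would expect $v$ to have spare capacity toward $T$ unless some other cut separates $v$ from $T$.

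Next I would convert $f^{(S,v)}(T) = h$ into a structural statement. The equality means there is a cut of value $h$ separating the virtual source (attached to both $S$ and $v$) from $T$. Following the template of Lemma~1, this forces one of two possibilities: either $v$ actually lies on the source side of the original min-cut (i.e., $v \in A$), contradicting the assumption $v \in A'$; or there exists a second min-cut situated between $v$ and $T$, which contradicts the standing single-min-cut assumption. Either branch yields a contradiction, so $v$ must be a node wEDC.

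The main obstacle I anticipate is justifying cleanly the dichotomy in the last step, namely that $f^{(S,v)}(T)=h$ forces either $v\in A$ or the existence of a separate min-cut between $v$ and $T$. This requires a careful min-cut/max-flow argument: one must show that if no additional flow can be pushed from $v$ to $T$ on top of the established $S$-$T$ flow, then the saturated edges form a cut that either coincides with the original cut (placing $v$ on the wrong side) or constitutes a genuinely distinct second cut. Everything else is a direct dualization of the preceding proof, so the argument is short, but this structural claim is where the real content lies and where I would spend the most care.
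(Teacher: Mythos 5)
Your proposal is correct and is exactly what the paper intends: the paper gives no separate proof for this lemma, stating only that it follows ``in a similar fashion'' to the preceding one, and your argument is precisely that dualization (not-wEDC reduces to $f^{(S,v)}(T)=h$, which forces either $v\in A$ or a second min-cut between $v$ and $T$, both contradictions). The dichotomy step you flag as the main obstacle is asserted at the same level of rigor in the paper's own proof of Lemma 1, so no additional work is expected there.
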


In our following discussion we refer to $A$ as the \emph{pre-cut} portion of the network, and to $A'$ as the \emph{post-cut} portion of the network. Figure \ref{genCase} summarizes the previous discussion.

\begin{figure}[tbh]
\centering
\includegraphics*[scale = 0.3]{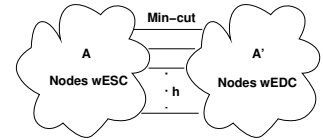}
\caption{Nodes wESC, wNEC and wEDC with respect to min-cuts}
\label{genCase}
\end{figure}


\section{Problem description}
\label{Sec:problem}

The cutting-edges, cannot be protected unless we trade bandwidth for survivability (i.e., unless we use an S-T path to carry redundant information to the destination), which reduces the useful S-T information rate. This tradeoff not only protects the cutting-edges, but also protects any edge carrying data in the network. However, the non-cutting-edges (or a subset of them) can be protected without reducing the S-T information rate, if the graph contains nodes wESC and/or wEDC. For example, nodes E, F, I and J in Figure \ref{origG} are nodes wESC, and node H is a node wEDC. There are four possible ways to utilize the extra source connectivity in Figure \ref{origG}; 1) protect data units $x$ and $y$ by sending $x+y$ to F through C, 2) protect $w$ by sending a duplicate to E through C and F, 3) protect $w$ by sending a duplicate to J through C, F, E and G 4) protect $z$ by sending a duplicate to I through C and F. The first option is better than the other three since sending $x+y$ to F enhances the chances of two data units ($x$ and $y$) to reach T, compared to duplicating $w$ or $z$ alone, which protects a single data unit only. Figure \ref{infoFlowG} shows the first option, and it also shows how to utilize the extra destination connectivity from node H, where H sends a duplicate of $y$ to T through node K.


\begin{figure}[tbh]
\begin{minipage}[b]{0.3\linewidth}
\centering
\includegraphics*[scale = 0.24]{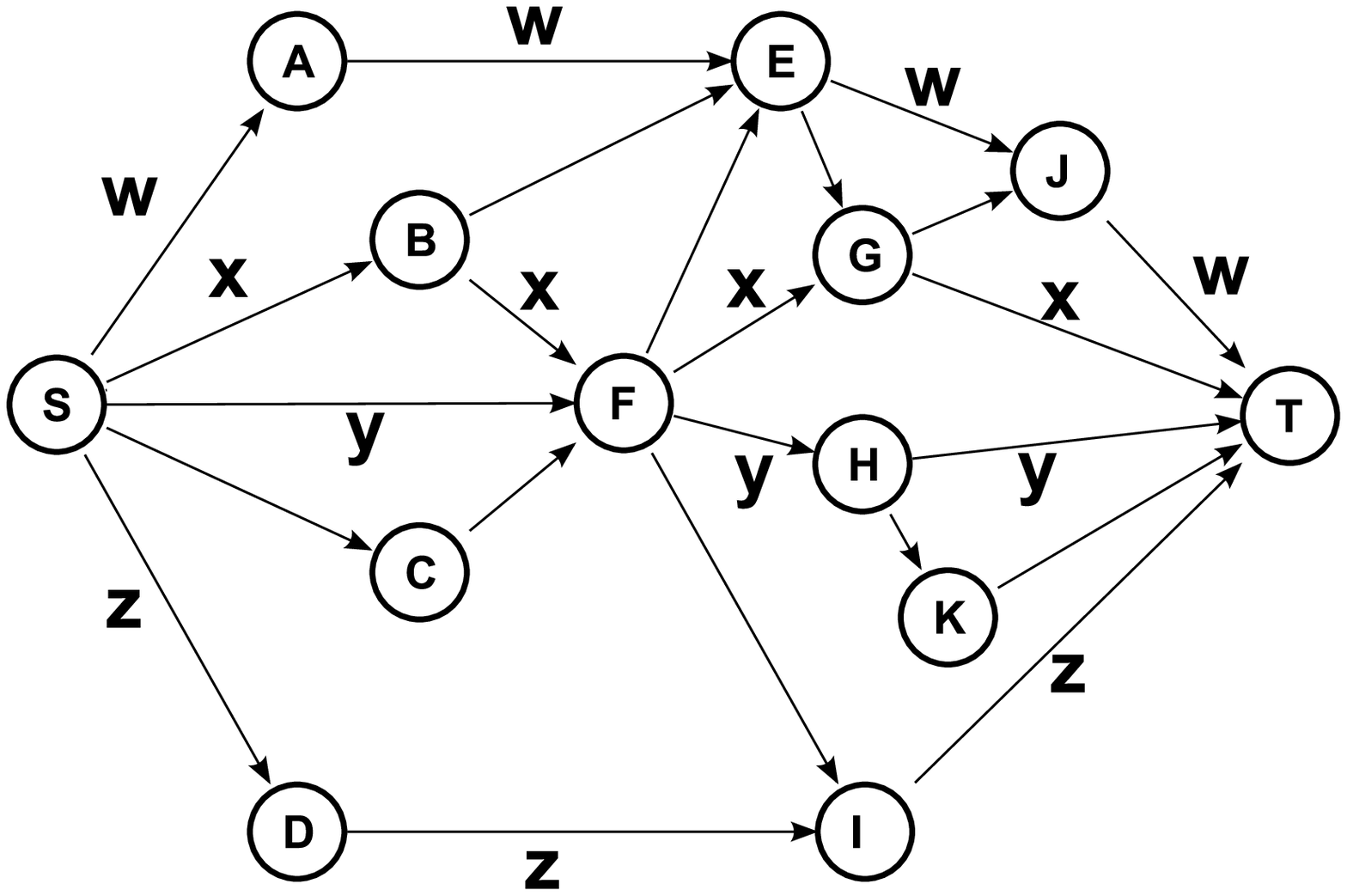}
\caption{Graph G with S-T max-flow = 4}
\label{origG}
\end{minipage}
\hspace{1.5cm}
\begin{minipage}[b]{0.3\linewidth}
\centering
\includegraphics*[scale = 0.24]{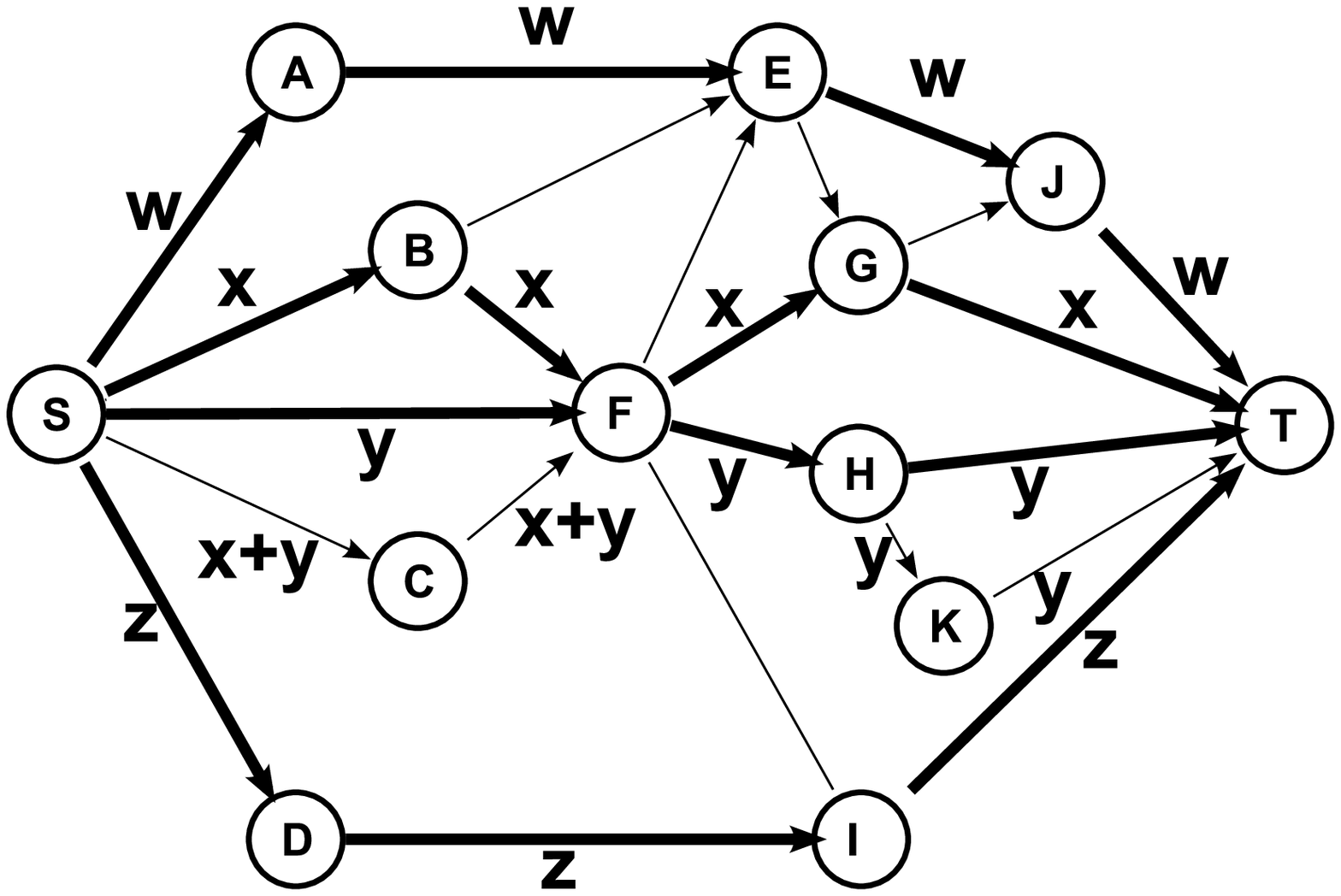}
\caption{Utilizing extra connectivity}
\label{infoFlowG}
\end{minipage}
\end{figure}

In this work, we propose a different way to handle the "survivability vs. bandwidth" trade-off. We propose a new approach to provide protection to the S-T information flow without reducing the useful S-T data rate. Basically, we avoid protecting the bottlenecks in the network (the min-cut links), and we try to efficiently utilize (by using network coding if possible) the available network connectivity before and/or after the bottleneck to provide protection to the non-min-cut links in the graph. We divide the problem into two sub-problems as follows:

\begin{enumerate}
\item Pre-cut protection: Our objective is to maximize the number of pre-cut-protected S-T paths. We show that this problem is NP-hard, and we provide a heuristic to solve it. To evaluate our heuristic we compare its performance to an ILP.
\item Post-cut protection: Similar to the previous objective, we aim to maximize the number of post-cut-protected S-T paths. Let $e_i$ be the closest cutting edge to the destination T on path $P_i$. We show that all the paths that do not have T as the head node of $e_i$ , where $1 \leq i \leq h$, can be post-cut-protected together against at least one failure. 
\end{enumerate}

\section{Pre-Cut: Nodes with Extra Source Connectivity}
\label{Sec:preCut}

As discussed in Section \ref{Sec:prelim}, all nodes wESC are located in the pre-cut portion of the network. Assume that the set $\mc{X}$ contains all the nodes wESC, $\mc{X}=A \backslash S = \{u_1, u_2, \dots, u_{|\mc{X}|}\}$. Then, the following is true:

\begin{equation}
\label{equ:UnrealESC}
(\sum^{|\mc{X}|}_{i=1} f^S(u_i,T)) - |\mc{X}|f^S(T)  \geq f^S(u_1, u_2, \dots, u_{|\mc{X}|}, T) - f^S(T) 
\end{equation}

This is because the extra source connectivity may be shared between the nodes in $\mc{X}$. Therefore, the right hand side of the inequality is what really determines the available extra source connectivity (ESC). This implies that not all nodes wESC in $\mc{X}$ can receive redundant flows from S to be used to protect the S-T max-flow, and thus, a subset $X \subseteq \mc{X}$ should be intelligently selected to receive the available extra source flow and utilize it in the best way possible. Note that the number of nodes in $X$ cannot exceed the extra available connectivity, i.e.:
\begin{equation}
\label{equ:ESC}
ESC = f^S(u_1, u_2, \dots, u_{|\mc{X}|}, T) - f^S(T) \geq |X|
\end{equation}

The selection of $X$ depends on how the S-T max-flow is routed on the graph. Consider the graph in Figures \ref{ex1_a} and \ref{ex1_b}, the S-T max-flow in this network is 2, and there is only one S-T min-cut in the graph, which contains the edges (A,T) and (C,T). Nodes A, B and C are nodes wESC, and the total available extra source connectivity equals $f^S(A,B,C,T) - f^S(T) = 4 - 2 = 2$. Assume that the max-flow is routed as shown in Figure \ref{ex1_a} (the dashed lines), in this case $X_1=\{B,C\}$ since the extra source connectivity is consumed by B and C. Moreover, note that only the path forwarding \textbf{b} can be pre-cut-protected by sending copies of \textbf{b} on $(S,B)$ and $(S,C)$. Now consider the routing shown in Figure \ref{ex1_b}, in this case $X_2=\{A,C\}$. Unlike the previous case, both paths can be pre-cut-protected by sending a second copy of \textbf{a} to A, and a second copy of \textbf{b} to C through B. Obviously, the second routing option is better since it allows the protection of both paths (equivalently both data units), in this sense we say $X_2$ is better than $X_1$. 

\begin{figure}[htp]
\begin{center}
\subfigure[]{
	\label{ex1_a}
	\includegraphics[scale=0.3]{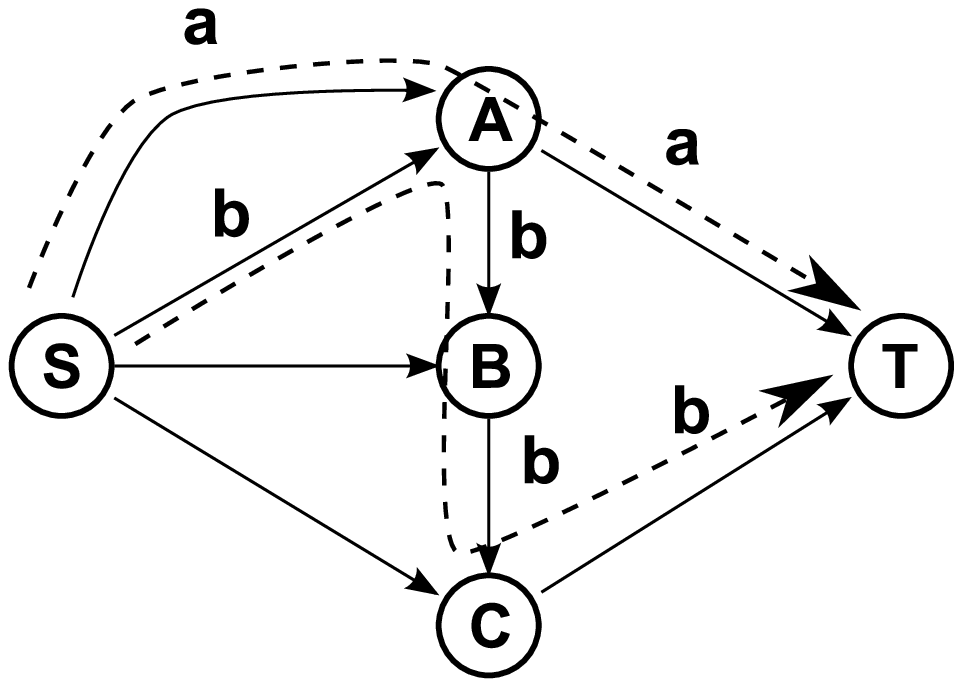}
}
\hspace{1cm}
\subfigure[]{
	\label{ex1_b}
	\includegraphics[scale=0.3]{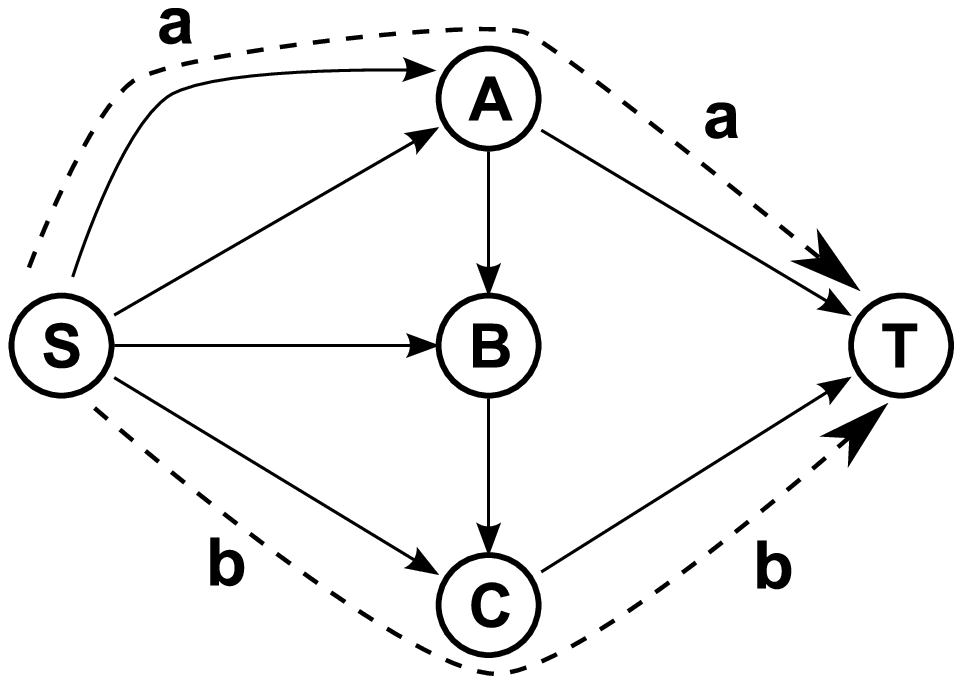}
}
\end{center}
\caption{Routing the max-flow is what determines $X$. In (a) $X=\{B,C\}$, and one path is protected. In (b) $X=\{A,C\}$, and both paths are protected.}
\label{fig:ex1}
\end{figure}

It was shown in the previous example that routing the max-flow and selecting $X$ are inseparable problems, and that routing the S-T max-flow corresponds to selecting $X$. Let us define the extra source connectivity to a node $u$ with respect to the routing of the S-T max-flow in the network as:\[\footnotesize EC(u) = f^S(u,T) - f^S(T)\] We say that an S-T path is pre-cut-protected if a segment of this path in the pre-cut portion of the network is protected. That is, a path is pre-cut-protected if it contains a node wESC with respect to the routing of the S-T max-flow. Therefore, maximizing the number of pre-cut-protected paths means maximizing the number of paths containing nodes wESC. 

For large networks, trying-out all possible routing choices to find the best one that will maximize the number of paths containing nodes wESC is computationally expensive. The following theorem proves that this problem is in fact an NP-hard problem. The full-proof is omitted due to space limitations, and only a sketch of the proof is provided.

\begin{theorem}
Routing the S-T max-flow to maximize the number of S-T paths containing nodes wESC is NP-hard.
\end{theorem}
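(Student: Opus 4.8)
The plan is to prove NP-hardness by a polynomial reduction from a known NP-complete problem, after first pinning down the decision version of the optimization. The decision problem is: given a DAG $G=(V,E)$ with source $S$, sink $T$, S-T max-flow $h$, a single min-cut, and an integer $k$, decide whether there is a routing of the S-T max-flow (a decomposition into $h$ edge-disjoint S-T paths together with an edge-disjoint routing of the extra source flow) for which at least $k$ of the $h$ paths contain a node wESC with respect to that routing. Membership in NP is immediate: a routing is described by the $h$ paths plus the routing of the redundant flow, which is polynomial in $|E|$; and verifying that it realizes the max-flow, computing $EC(u)=f^S(u,T)-f^S(T)$ for every node by max-flow computations, and counting how many of the $h$ paths pass through a node with $EC(u)>0$ are all polynomial.

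For hardness, I would reduce from 3-SAT. Given a formula $\phi$ with variables $x_1,\dots,x_n$ and clauses $c_1,\dots,c_m$, I would build a pre-cut gadget graph. Each variable $x_j$ is represented by a small sub-network offering exactly two edge-disjoint routings for its unit of flow, a \emph{true} route and a \emph{false} route, so that committing the max-flow to one forecloses the other under the unit-capacity constraint; this encodes a Boolean assignment. Each clause $c_i$ is represented by a path destined for one of the $h$ cut edges and equipped with a candidate wESC node that can be supplied with redundant flow only through the channels left free by the literal routings that satisfy $c_i$. The extra source connectivity budget $ESC$ and the cut width $h$ are dimensioned so that a clause-path acquires a node wESC with respect to the routing exactly when at least one of its literals is set true. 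Setting the target to $k=m$, the construction should satisfy: $m$ paths are protectable iff every clause is satisfied, i.e. iff $\phi$ is satisfiable.

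Both correctness directions must then be checked. In the forward direction, a satisfying assignment dictates the variable-gadget routes, the freed channels let the redundant flow reach every clause node, and all $m$ clause-paths are protected within the budget. In the reverse direction, any routing achieving $k=m$ protected clause-paths must, through the edge-disjointness of the variable gadgets, commit each variable to a single consistent truth value, from which a satisfying assignment is read off. This consistency step, ensuring that clause protection cannot cheat by using contradictory literal routes for the same variable and that the shared slack captured by \refeq{equ:UnrealESC} cannot protect more paths than the intended budget permits, is the crux of the argument and the main obstacle.

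The secondary technical hurdle, which I would handle jointly with the gadget design, is keeping the constructed instance inside the paper's modeling assumptions: the graph must remain a DAG with a single S-T min-cut of value $h$, and every clause node must be genuinely wESC in the structural sense, with $f^S(u,T)>h$ while adjoining it as a source still leaves the max-flow at $h$. I would enforce this by attaching free return edges behind the cut and by padding the cut with neutral paths that carry no gadget, sizing $h$ and $ESC$ so that the single-cut property is provably preserved no matter how the gadget flows are routed.
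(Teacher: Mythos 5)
Your submission is a plan for a proof, not a proof: the two places where the real work lives are exactly the places you leave open. First, the variable and clause gadgets are specified only by the properties you want them to have (``offering exactly two edge-disjoint routings,'' ``equipped with a candidate wESC node that can be supplied \dots only through the channels left free''), never by an actual construction, and you yourself identify the reverse-direction consistency argument as ``the crux of the argument and the main obstacle'' without resolving it. That obstacle is not incidental: in this problem the object being chosen is an integral flow, and the $h$ ``paths'' are only a decomposition of that flow. Two units of flow meeting at a node can always be re-decomposed (swapped), so a gadget whose correctness relies on the identity of which path went through which literal route can be defeated by re-decomposing the same edge flow; preventing this is precisely what a genuine construction must do, and nothing in your sketch does it. Second, the structural side conditions --- that the constructed graph stays a DAG with a \emph{single} S-T min-cut, and that clause nodes are genuinely wESC ($f^S(u,T)>h$ while $f^{(S,u)}(T)=h$) --- are deferred to unspecified ``padding'' and ``free return edges,'' even though the single-cut assumption is what makes the paper's problem well-posed in the first place.

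For comparison, the paper reduces from Maximum Coverage with Group budget constraints (MCG) rather than 3-SAT, and this choice sidesteps your main obstacle: the problem at hand is itself a coverage problem (each routing choice ``covers'' a set of paths, and the shared extra source connectivity in inequality \refeq{equ:UnrealESC}--\refeq{equ:ESC} plays the role of the group budgets), so the reduction only has to translate group budgets into connectivity constraints on a single-cut graph, with no need to enforce Boolean consistency across gadgets. A 3-SAT reduction may well exist, but it would have to solve the flow re-decomposition issue head on, which is a substantially harder gadget-engineering task than the coverage-to-coverage translation the paper uses. As it stands, your argument establishes neither direction of the reduction and therefore does not prove the theorem.
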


\begin{proof}
To prove this theorem, we reduce the Maximum Coverage problem with Group budget constraints (MCG) \cite{CA04} to our problem. In the MCG problem, we have a collection of sets $C=\{C_1, C_2, ..., C_m\}$ that are not necessarily disjoint, where each set is a subset of a given ground set $H$. In addition, $C$ is partitioned into disjoint groups $\{G_1,G_2,...,G_n\}$, where each $G_j$ consists of a group of sets in $C$. The problem asks to select $k$ sets from $C$ to maximize the cardinality of their union, such that at most one set from each group is selected. Note that the cover size in the MCG problem is limited by the group budget constraints, and that the number of paths containing nodes wESC is limited by the available extra source connectivity in our problem. To prove the theorem we reduce any instance of the MCG problem to a directed graph with a single cut that translates the group budget constraints into constraints on the available extra source connectivity (similar to the one in Figure \ref{MCG_reduction}). It is now easy to prove that solving the MCG problem solves our problem and vice versa.  

\begin{figure}[tbh]
\centering
\includegraphics*[scale = 0.3]{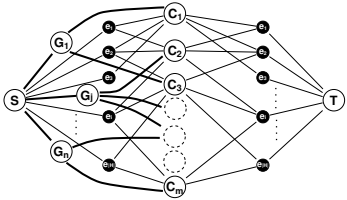}
\caption{Graph resulting from reduction}
\label{MCG_reduction}
\end{figure}

\end{proof}

Note that if network coding was not allowed, then from equation \refeq{equ:ESC} we cannot protect more than ESC data units. Therefore, to utilize the extra source connectivity in a more efficient manner we should apply network coding whenever possible. Network coding can be used if a node wESC, say $u$, lies on more than one S-T path, and has $EC(u) \geq 1$. For example, let $u$ be a node wESC that lies on two S-T paths, and that has $EC(u) = 1$. A network code can be designed to deliver three combinations in two data units to $u$, such that any two combinations are solvable, i.e., two data units are protected from S to $u$ against a single link failure. Note that the number of failures that can be tolerated is at most $EC(u)$. Therefore, the nodes in $X$ should have the following properties:

\begin{enumerate}
\item Each node $u_i \in X$ must have $f^S(u_i) > f^{u_i}(T)$. 
\item The combinations received by a node $u_i \in X$ must be solvable if at most $e = f^S(u_i) - f^{u_i}(T)$ failures occurred on the $f^S(u_i)$ paths from S to $u_i$.
\end{enumerate}
%

The first condition requires the flow from the source to each node $u_i \in X$ to be larger than the flow from that node to the destination. This condition is necessary to introduce redundancy in the forwarding process from S to the nodes in $X$. The second condition can be satisfied by designing a network code that delivers, for each node $u_i$, a set of $f^S(u_i)$ combinations, such that any $f^{u_i}(T)$ combinations of them are solvable. These two conditions allow a node $u_i$ to act as pre-cut decoding node, which can recover the data units sent from S to T through $u_i$, if at most $e = f^S(u_i) - f^{u_i}(T)$ failures occurred on the S-$u_i$ link-disjoint paths, and then send these native data units to T. 

In the next section we present an integer linear program (ILP) formulation of our problem. Solving the ILP will select the routes for the S-T max-flow, and will maximize the number of pre-cut-protected paths (the number of S-T paths containing nodes wESC). 

\section{Integer Linear program Formulation}
\label{Sec:ILP}

We need to maximize the number of S-T paths that contain nodes wESC, regardless of the number of those nodes. We assume that the S-T max-flow equals $h$, and that the flow can take integer values only. Since we are interested in the number of paths containing nodes wESC, we treat each of the $h$ units of flow as a commodity. That is, we have $h$ commodities, each of which is responsible for selecting a single S-T path. The ILP find the routes for these $h$ commodities on a graph with unit-capacity links, such that the number of paths containing nodes wESC is maximized. Let us begin by defining our notation:

\begin{itemize}
\item Let $\sigma_i$ be a binary variable that equals 1 if path $i$ ($P_i$) goes through at least one node wESC, and 0 otherwise. That is, $\sigma_i = 1$ if $P_i$ is pre-cut-protected, and $0$ otherwise.
\item $f^i_{(a,b)}$ is the value of the flow from commodity $i$ on link (a,b). The links forwarding $f^i$ determines $P_i$.
\item $u^i_j$ is the amount of flow $f^i$ entering node j. Although $u^i_j$ is not constrained to be binary, it will be either 1 or 0 since the source sends only one unit of flow $f^i$.
\item $g^j_{(a,b)}$ is the amount of extra flow $g^j$ that is sent from the source to node $j$ on link (a,b). A node that consumes (not forwards) this flow will be included in $X$. 
\item $x_j$ is the amount of flow $g^j$ entering node $j$. Although $x_j$ is not constrained to be binary, it will be either 1 or 0 since the source sends only one unit of flow $g^i$.
\item $\zeta^i_j$ is a binary variable that equals 1 if node $j$ is on $P_i$ and is wESC, i.e., $\zeta^i_j = u^i_jx_j$.
\item $d_j$ is the minimum hop distance of node j from the source, which is a constant that can be computed for each node before solving the ILP, e.g., using Dijkstra's shortest path algorithm. 
\item $\delta^i_j$ is a variable that equals $d_j$ if $\zeta^i_j = 1$, i.e., $\delta^i_j = d_j \zeta^i_j$.
\item $\Omega$ is a very large positive constant. 
\item $w$ is a weighing factor for $\sum \sigma_i$, and is larger than the length of the longest possible path from the source to any node in the network, and can be set to $|E|$. This way the ILP maximizes the length of the protected paths if it does not reduce the number of protected data units.
\end{itemize}

our objective function is:
\begin{equation}
\label{obj}
Maximize ~~~ w\sum^h_{i=1} \sigma_i + \sum_{\forall j}\delta_j
\end{equation}

Subject to,

\begin{equation}
\label{FlowIsH}
{\sum_{\forall(S,b) \in E} f^i_{(S,b)} = 1,~\forall i, ~ where~ 1\leq i \leq h}
\end{equation}

\begin{equation}
\label{FlowIsCnsrvd}
\sum_{\forall(a,b) \in E}f^i_{(a,b)} - \sum_{\forall(b,a) \in E} f^i_{(b,a)} = 0, ~\forall i,~\forall b \in V \backslash \{S,T\}
\end{equation}

\begin{equation}
\label{NodeIsInP}
u^i_j - \sum_{\forall(a,j) \in E}f^i_{(a,j)} = 0 , \forall i, j
\end{equation}

\begin{equation}
\label{gIs1}
\sum_{\forall(S,b) \in E} g^j_{(S,b)} \leq 1,~\forall j
\end{equation}

\begin{equation}
\label{xIsgIn}
x_j - \sum_{\forall(a,j) \in E} g^j_{(a,j)} = 0,~\forall j
\end{equation}

\begin{equation}
\label{gIsCnsrvd}
\sum_{\forall(k,v) \in E}g^j - \sum_{\forall(v,k) \in E} g^j = 0, ~\forall v \in V \backslash \{S,j\}
\end{equation}

\begin{equation}
\label{capCnstrnt}
\sum_{\forall j}g^j_{(a,b)} + \sum_{\forall i} f^i_{(a,b)} \leq 1, ~\forall (a,b) \in E
\end{equation}

\begin{equation}
\label{zetaCnstrnt}
\zeta^i_j - \frac{u^i_j + x_j}{2} \leq 0, ~\forall i, j
\end{equation}

\begin{equation}
\label{sigmaCnstrnt}
\sigma_i - \frac{\sum_{\forall j \in V} \zeta^i_j}{\Omega} < 1
\end{equation}

\begin{equation}
\label{deltaCnstrnt}
\delta^i_j - d_j\zeta^i_j = 0
\end{equation}

Constraint \refeq{FlowIsH} forces the S-T flow to be h, and constraint \refeq{FlowIsCnsrvd} conserves all commodities on all nodes except S and T. \refeq{NodeIsInP} make $u^i_j = 1$ if node $j$ is on path $i$. The extra flow that can be sent to a node wESC is bounded by 1 as shown in constraint \refeq{gIs1}. Constraint \refeq{xIsgIn} sets $x_j$ to 1 if node j receives any extra flow. The extra flow ($g^j$) is conserved at all nodes except the source and node j by constraint \refeq{gIsCnsrvd}. Constraint \refeq{capCnstrnt} guarantees that the link capacity of unit of flow is not exceeded. Constraint \refeq{zetaCnstrnt} sets $\zeta^i_j$ to 1 if node j is on path $P_i$ and is a node wESC. Constraint  \refeq{sigmaCnstrnt} prevents $\sigma_i$ from being 1 if $P_i$ has no node wESC. The value of $\delta^i_j$ is set to $d_j$ if $\zeta^i_j = 1$ by constraint \refeq{deltaCnstrnt}. Note that forcing the extra flow $g^j$ sent to node $j$ to be at most 1 does not affect the ILP optimality, since a path is considered pre-cut-protected if it has a node wESC regardless of the amount of extra flow received at that node.
%
%
%

In the next section we present a heuristic approach to solve the problem of maximizing the number of paths containing nodes wESC. Moreover, we compare the heuristic results to the results from the ILP.

\section{Heuristic approach}
\label{Sec:Hrstc}

Our heuristic works in three phases; the first one greedily selects an initial set $X'$; the second one modifies the flow on the graph (if needed) to guarantee that the S-T max-flow is achieved, and the third one utilizes any remaining connectivity and produces the final set $X$. The first phase works in iterations, where a single node is added to $X'$ in each iteration. Each time we add the node that can send the most flow to the destination, while being able to receive more flow from the source, to satisfy the two conditions stated at the end of Section \ref{Sec:preCut}. If no more nodes satisfy this criteria and the S-T flow is still less than $h$, the second phase is entered. The second phase finds as much augmenting paths as possible from S to T so that the S-T max-flow is maximized. Finally, the third phase checks the nodes in the pre-cut portion of the graph to see if there are any remaining nodes wESC, and makes use of this extra connectivity.

\subsection{Phase 1: Selecting the initial set $X'$}

Recall that if all the min-cut edges are deleted, then the graph will be divided into two partitions $A$ (pre-cut), and $A'$ (post-cut). Note that the routing of the S-T flow in the post-cut portion of the network is independent from the routing of the S-T flow in the pre-cut portion of the network. Therefore, and since the selection of the final set $X$ depends on the routing of the S-T max-flow in the pre-cut portion of the graph, we can simplify the graph under consideration and just focus on the sub-graph, $H$, induced by the nodes in $A$ with a little modification. Specifically, given a directed graph $G(V_G,E_G)$, let $t(u,v)$ represent the tail node of edge (u,v), i.e., node $u$. Also, let $F_S$ be the set of tail nodes on the min-cut edges,i.e., $F_S$ contains the nodes in:\[ \bigcup_{i=1}^h t(C_i)\] where the S-T max-flow $= h$, and $C_i$ is the cutting edge on path $i$ as defined in Section \ref{Sec:prelim}. We transform graph $G$ to $H(V_H,E_H)$ as follows:

\begin{enumerate}
\item Delete the nodes in $\{V_G \backslash A\}$
\item $V_H = \{A, T'\}$, where $T'$ is a dummy destination node.
\item $E_H = \{(u,v)|u,v \in A\} \bigcup \{(u,T')| \forall u \in F_S\}$. Note that $\{S, F_S\} \subset A$.
\end{enumerate}

Each iteration of phase 1 adds the node that can send the most flow to T (or equivalently $T'$), while being able to receive more flow from the source. Let this node be $x$, then it satisfies the following conditions:

\begin{enumerate}
\item $f^x(T') \geq f^u(T'), \forall u \in V_H$
\item $f^S(x) > f^x(T')$
\end{enumerate}

After identifying node $x$, the flow is sent in two steps; in the first step, $(f^x(T') + 1)$ units of flow are sent from S to $x$, and in the second one $f^x(T')$ units of flow are sent from $x$ to $T'$. This way, node $x$ can receive redundant information to protect the $f^x(T)$ path segment from S to $x$. Only one extra unit of flow is sent to $x$ so that the extra source connectivity is fairly divided between the nodes in $X'$ at the end of phase1. 

From a network flows perspective, to forward the flow as described in the previous paragraph, $(f^x(T') + 1)$ units of flow should be sent on $(f^x(T') + 1)$ augmenting paths from S to $x$, and $f^x(T')$ units of flow should be sent on $f^x(T')$ augmenting paths from $x$ to $T'$. Note that since we are working on a residual graph, the paths found from $x$ to $T'$ may contain backward edges, which were used initially to forward flow from S to $x$. If this happens then the flow sent from S to the nodes in $X'$ may be changed and some nodes in $X'$ may not still be nodes wESC. To resolve this issue, we can delete all the edges on the paths found from S to $x$ in each iteration. However, this may reduce our ability to find augmenting paths from S to the nodes in $H$, and thus, may reduce the number of nodes that can be added to $X'$. Therefore, to be able to find augmenting paths without causing any of these problems we work with two copies of $H$. The first one, which we call $H^S$, is used to find paths from S to the nodes in $H$, and the second one, which we refer to as $H^T$, is used to find paths from the nodes in $H$ to $T'$. The links in $H^S$ and $H^T$ are related to each other as follows:

\begin{itemize}
\item After the first step is done, and  $(f^x(T') + 1)$ paths were found from S to $x$ and augmented on $H^S$. Every edge (u, v) in $E_{H^T}$ that corresponds to a backward edge (v, u) in $E_{H^S}$ is deleted.   
\item Similarly, after the second step is completed, and  $f^x(T')$ paths were found from $x$ to $T'$ and augmented on $H^T$. Every edge (u, v) in $E_{H^S}$ that corresponds to a backward edge (v, u) in $E_{H^T}$ is deleted.   
\end{itemize}

In an iteration, if two or more candidate nodes have the same flow to $T'$, the tie is broken in favor of the largest minimum hop distance from the source, i.e., the one with the largest $d^S(u)$ is chosen to be added to $X'$. After that, if two or more nodes have the same flow and minimum hop distance a node is chosen randomly. Taking this into consideration, phase 1 ends when no more nodes can be added to $X'$. 


\subsection{Phase 2: Maximizing the S-T flow}

The resulting S-T flow from phase 1 equals $\sum_{\forall x \in X'} f^x(T')$, which might be less than or equal to $h$ (the max-flow). This is because the extra available connectivity is shared between the nodes in $V_H$. For example, consider the graph in Figure \ref{phase2_ex}, where the S-T max-flow is 2. Phase 1 resulted in adding only one node (F) to $X'$. Assume that node F receives two units of flow from S along the two paths $P_1 = \{S \rightarrow C \rightarrow F\}$ and $P_2 = \{S \rightarrow B \rightarrow A \rightarrow D \rightarrow E \rightarrow F\}$, and sends one unit of flow to T on the direct edge (F, T). The resulting residual graph after augmenting these paths is shown in Figure \ref{phase2_ex_b}, where the backward edges resulting from the augmentation process are shown in boldface. At this point, no more nodes wESC can be added to $X'$ (because the two conditions in the previous subsection are not met for any node), but the S-T flow so far is only equal to 1. Therefore, phase 2 should be entered to maximize the S-T flow. Assume that phase 2 found the path $P_3 = \{S \rightarrow A \rightarrow B \rightarrow E \rightarrow D \rightarrow T\}$, and augmented the flow. After this step, no more S-T paths can be found on the residual graph, which means that the S-T flow is maximized, the resulting residual graph is shown in Figure \ref{phase2_ex_c}. Note that after phase 2, node F still has two link-disjoint paths from S.

\begin{figure}[htp]
\begin{center}
\subfigure[Graph $H$ with S-T max-flow = 2]{
	\label{phase2_ex}
	\includegraphics[scale=0.25]{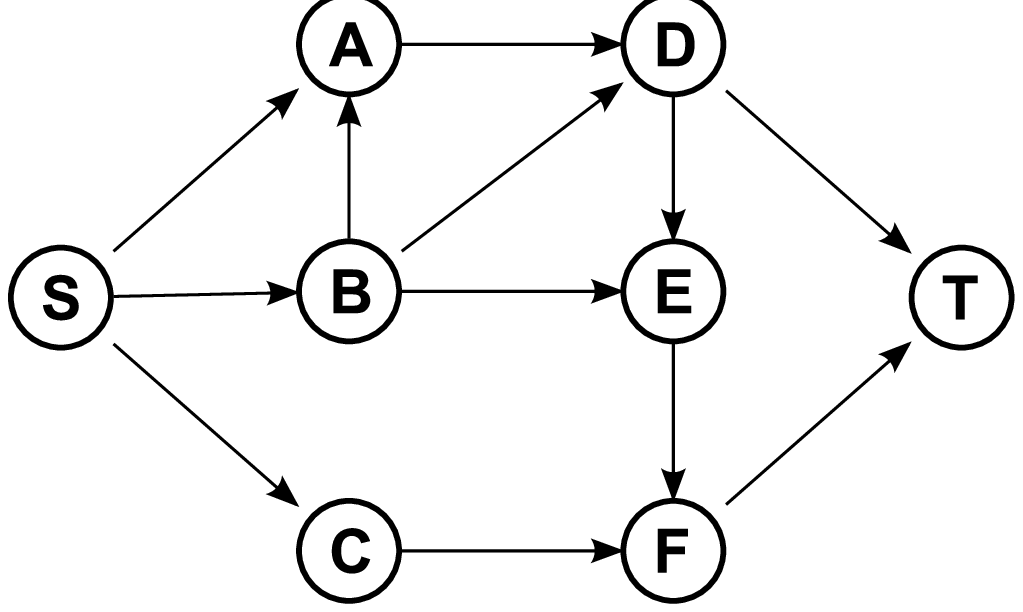}
}
\subfigure[Residual graph after phase1]{
	\label{phase2_ex_b}
	\includegraphics[scale=0.25]{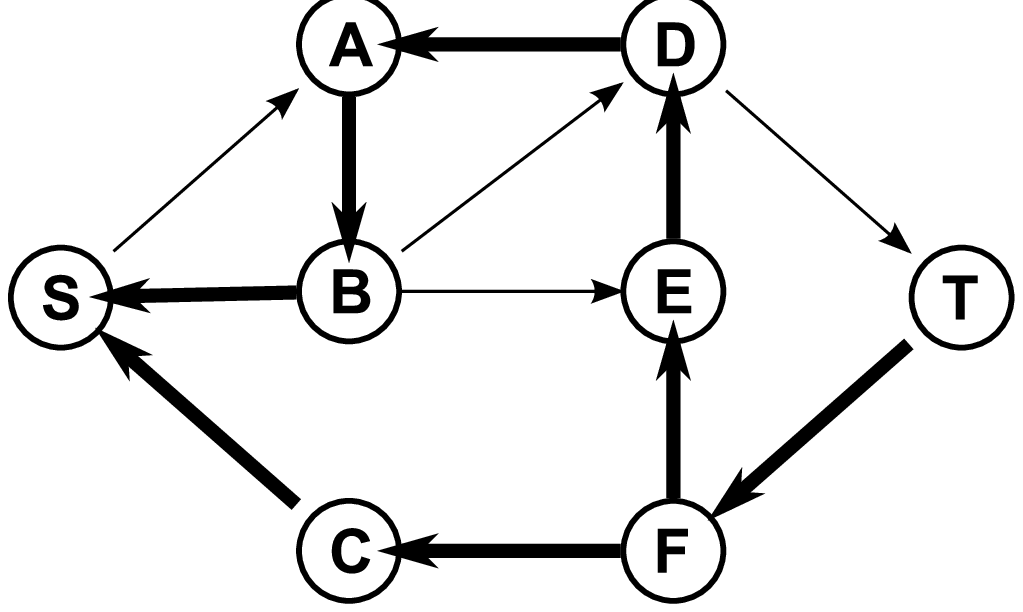}
}
\subfigure[Residual graph after phase2]{
	\label{phase2_ex_c}
	\includegraphics[scale=0.25]{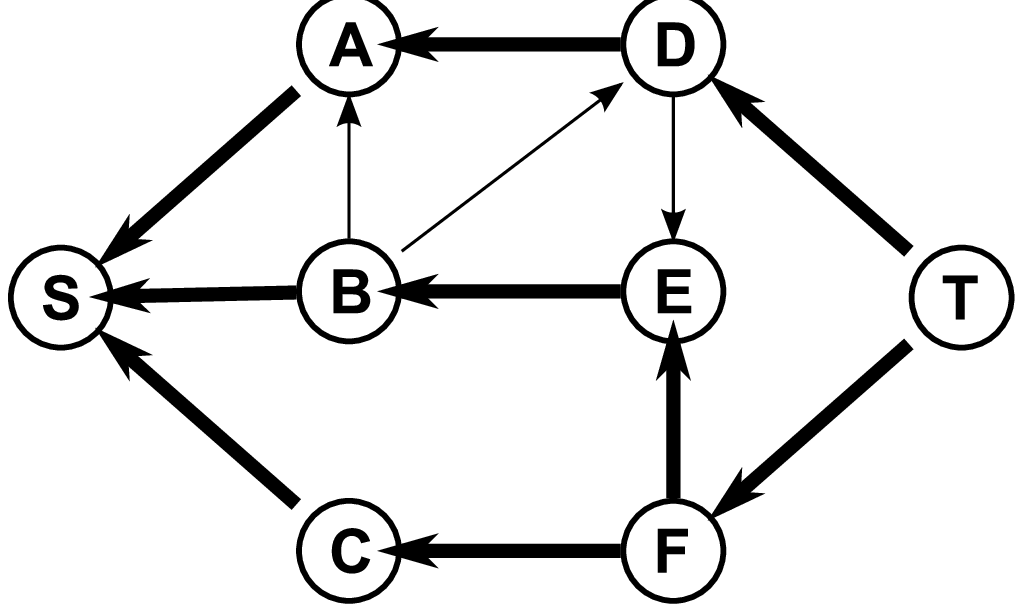}
}
\end{center}
\caption{The operation of the first two phases. (b) node F is added to $X'$, and it receives two units of flow from S and send one unit of flow to T. (c) The S-T max-flow is maximized}
\label{fig:phase2_ex}
\end{figure}

\subsection{Phase 3: Utilizing the remaining ESC}

This phase simply checks if it is possible to send extra flow to any node in $H$ (that lies on at least one path) after the first two phases are finished. If a node $u$ is found to be able to receive extra flow $e'$ from S, then if it is not already in $X'$ it should be added to $X'$. The number of data units node $u$ sends to $T'$ equals $f^u(T')$. The number of data units or combinations it can receive from S is $k = f^u(T') + e'$ if it is not in $X'$, and is $k = f^u(T') + e' + 1$ if it is already in $X'$. If $f^u(T') = 1$ no coding is needed and we need to just send copies of the same forwarded data unit on all the paths to $u$. However, if $f^u(T') > 1$ a network code should be designed to deliver $k$ combinations to $u$ such that any $f^x(T)$ of them are solvable. Algorithm \ref{ALG:SelectingX} summarizes the three phases. 

%
%
%

\begin{algorithm}[!t]
\small
\caption{Selecting set $X$}
\label{ALG:SelectingX}
\begin{algorithmic}[1]
\REQUIRE Graph $H(V_H,E_H)$, $h$ = S-T max-flow
\ENSURE Set $X$ containing nodes wESC 
\STATE $X' = \phi$, $ST\_flow = 0$,  $Phase\_done = 0$
\STATE Create matrices $Flow_S[V_H]$, $Flow_T[V_H]$ \COMMENT{One dimensional matrices initialized to all zeros, to store the final flow from S to each node in $X$, and from each node in $X$ to $T'$. This information will be used for coding later}
\STATE \COMMENT{Phase 1}
\STATE Create graphs $H^S$ and $H^T$, where $V_{H^S} = V_{H^T} = V_{H}$ and $E_{H^S} = E_{H^T} = E_{H}$. 
\WHILE{($Phase\_done==0$)}
\STATE Compute $f^S(u)$ on graph $H^S$, $\forall u \in V_{H^S}$
\STATE Compute $f^u(T')$ on graph $H^T$, $\forall u \in V_{H^T}$
\STATE Select node $x$, where $f^x(T') \geq f^u(T'), \forall u \in V_H$, and $f^S(x) > f^x(T')$
\IF{(No such node exists)}
\STATE $Phase\_done = 1$
\ELSE
\STATE Find $f^x(T')+1$ augmenting paths from S to $x$ on $H^S$
\STATE Delete all forward edges in $H^T$ if they are reversed in $H^S$\COMMENT{due to augmentation}
\STATE Find $f^x(T')$ augmenting paths from $x$ to $T'$
\STATE Delete all forward edges in $H^S$ if the are reversed in $H^T$ 
\STATE $X' = X' \cup \{x\}$
\STATE $ST\_flow = ST\_flow + f^x(T')$
\STATE $Flow_S[x] = f^x(T') + 1$
\STATE $Flow_T[x] = f^x(T')$
\ENDIF
\ENDWHILE
\FORALL{($(u,v) \in E_H$)}
\IF {($(v,u) \in E_{H^S} || (v,u) \in E_{H^T}$)}
\STATE Reverse $(u,v)$ in $H$
\ENDIF
\ENDFOR
\STATE $Phase\_done = 0$ \COMMENT{End of Phase 1, and beginning of Phase 2}
\WHILE {($Phase\_done == 0$)}
\IF{($ST\_flow = h$)}
\STATE $Phase\_done = 1$
\ELSE
\STATE Find an S-$T'$ augmenting path in $H$
\STATE $ST\_flow++$ 
\ENDIF
\ENDWHILE
\STATE $Phase\_done = 0$ \COMMENT{End of Phase 2, and beginning of Phase 3}
\FORALL {($u \in V_H$)}
\STATE Compute $p = f^S(u)$ on the current residual graph of $H$
\IF {($f^S(u) > 0$)}
\STATE Find $p$ augmenting paths from S to $u$ on $H$
\STATE $Flow_S[u] = Flow_S[u] + p$
\ENDIF
\IF {($u \notin X'$)}
\STATE Compute $q = f^{T'}(u)$ on $H^T$
\STATE $Flow_T[u] = Flow_T[u] + q$
\STATE $X' = X' \cup \{u\}$
\ENDIF
\ENDFOR
\RETURN $X'$
\end{algorithmic}
\end{algorithm}

\subsection{Evaluation}

In this section we compare the results from our heuristic to the results from the ILP presented in Section \ref{Sec:ILP}. The heuristic was compared to the ILP in five different cases. Each case represents a different network size, where the number of network nodes $V$ was changed to take the values $\{5, 10, 15, 20, 25\}$. In each case eighty random network instances were generated, and fed to the heuristic and the ILP. Figure \ref{Fig:PerfRatio} shows the ratio between the average number of protected paths by the heuristic and the average number of protected paths by the ILP for the eighty runs. The figure shows that the performance of our heuristic is acceptable, where in the worst case at $V=20$ it was around 77\% of the optimal on average.
 
\begin{figure}[tbh]
\centering
\includegraphics*[scale = 0.35]{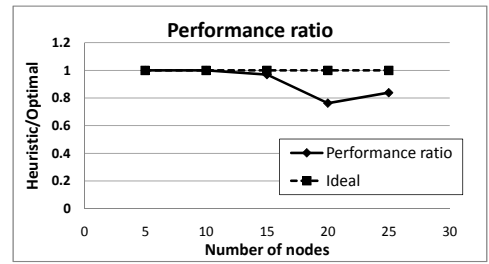}
\caption{Ratio of the number of protected paths by the heuristic to that of the ILP for different number of nodes}
\label{Fig:PerfRatio}
\end{figure}

To gain a better insight on the operation of the heuristic compared to the ILP we measured the S-T max-flow, counted the number of pre-cut-protected paths from the heuristic, and the number of pre-cut-protected paths resulting from the ILP in each time the heuristic and the ILP were executed (on the same network instance). 

\begin{figure*}[htp]
\begin{center}
\subfigure[]{
	\label{SubFig:V_10}
	\includegraphics[scale=0.232]{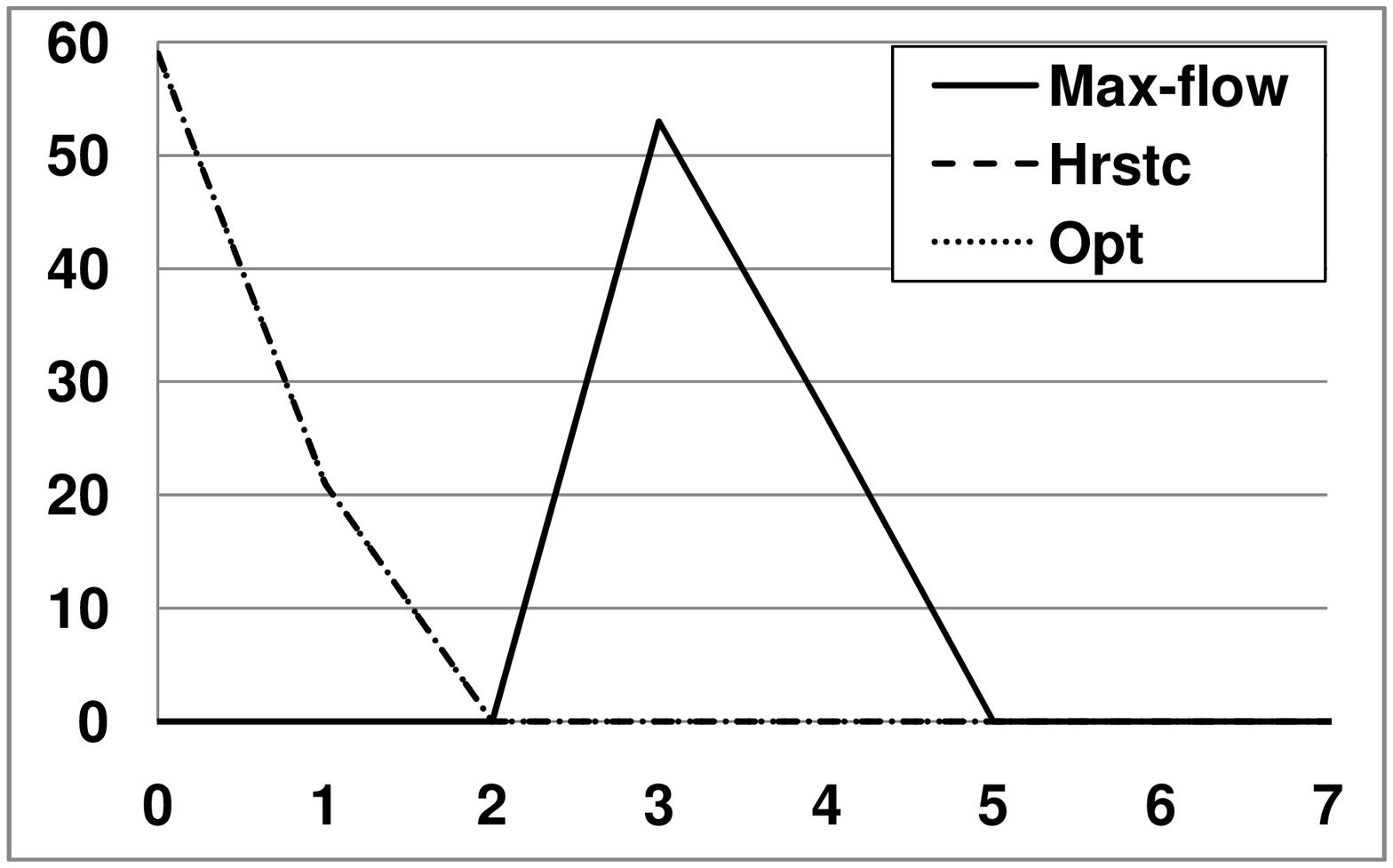}
}
\subfigure[]{
	\label{SubFig:V_15}
	\includegraphics[scale=0.232]{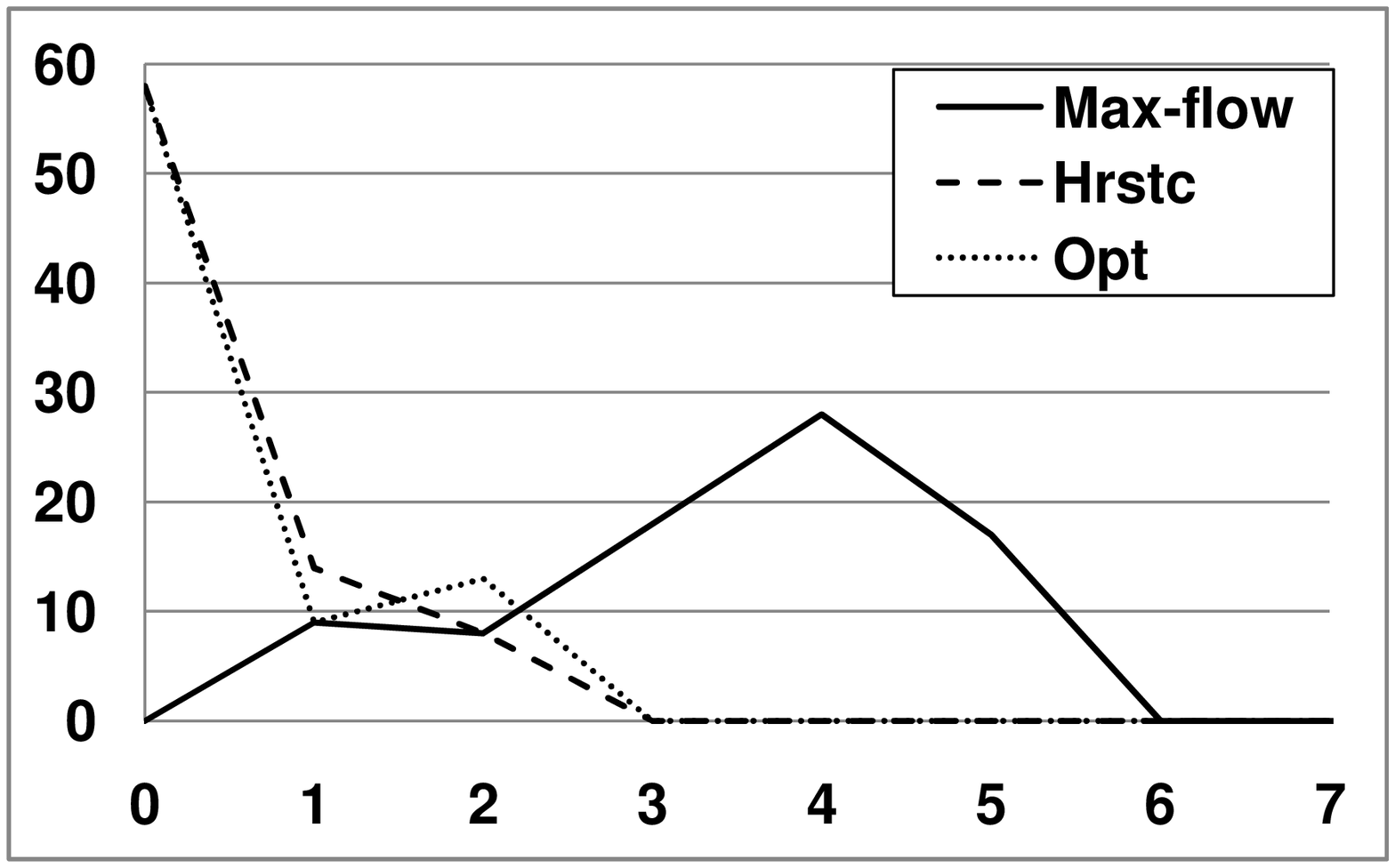}
}
\subfigure[]{
	\label{SubFig:V_20}
	\includegraphics[scale=0.232]{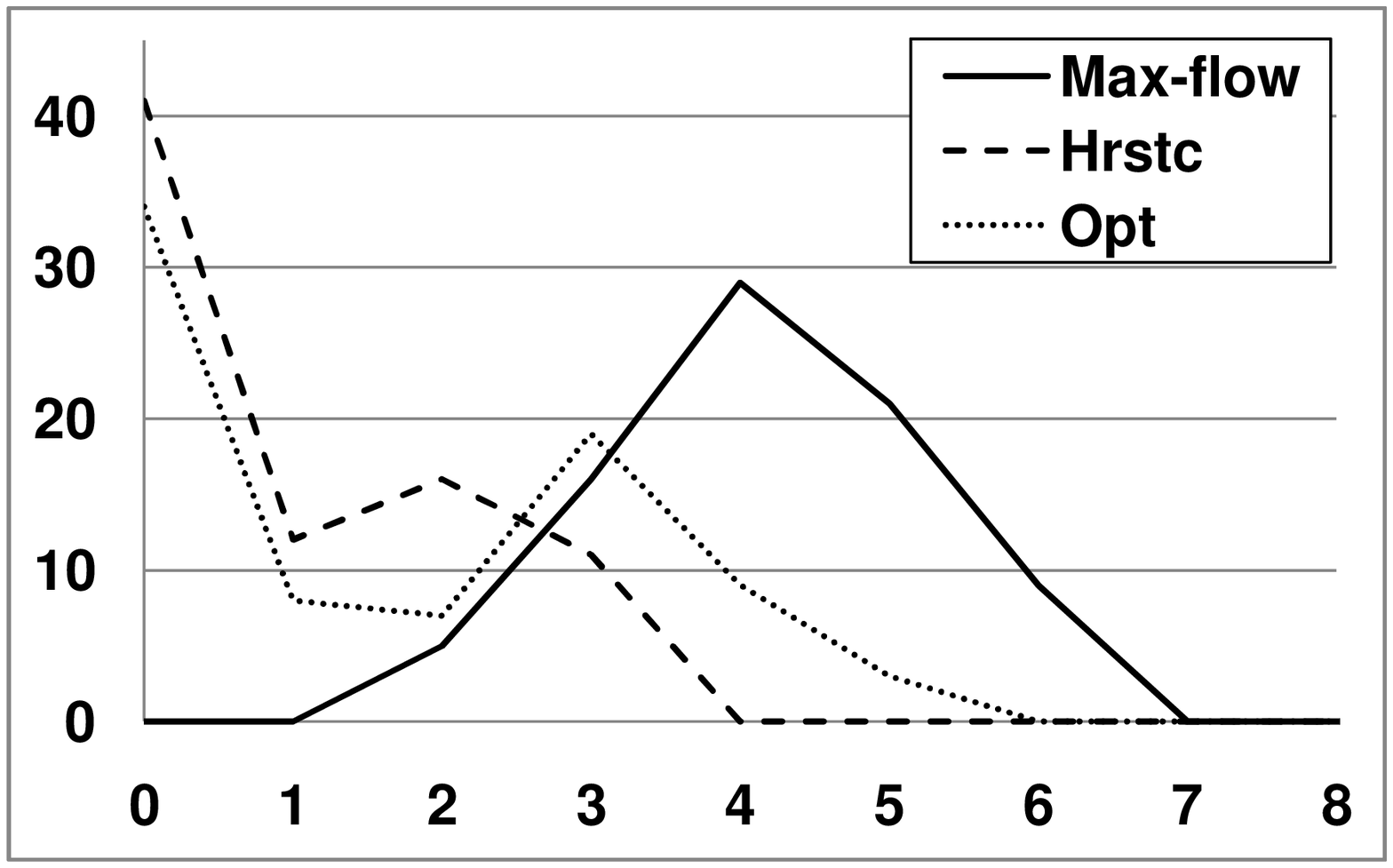}
}
\subfigure[]{
	\label{SubFig:V_25}
	\includegraphics[scale=0.232]{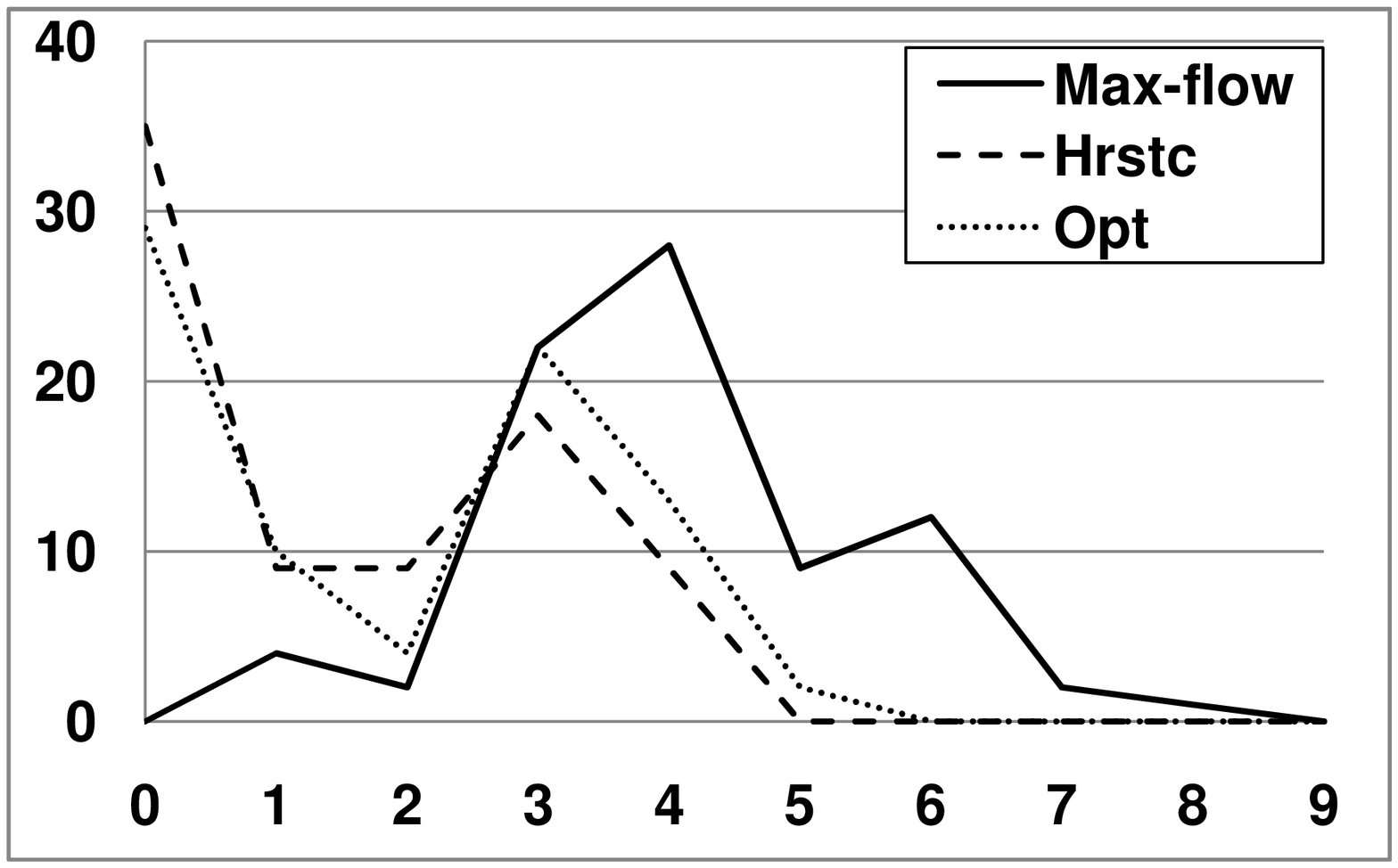}
}
\end{center}
\caption{All figures are histograms, which count three different frequencies: the max-flow, the number of protected paths from the heuristic and the number of protected paths from the ILP. (a) has $V=10$, (b) has $V=15$, (c) has $V=20$ and (d) has $V=25$. The x axis is the number paths either protected or counted in the max-flow, and the y axis is the number of times each number of paths occurred as a max-flow or protected by the ILP or the Heuristic}
\label{Fig:Results}
\end{figure*}
%

The histograms for the cases of V = 10, 15, 20, and 25 are shown in Figures \ref{SubFig:V_10}, \ref{SubFig:V_15}, \ref{SubFig:V_20}, and \ref{SubFig:V_25} respectively. In general, the results from the heuristic are close to those from the ILP. Note that in some cases, the number of times the heuristic is able to protect $X_1$ paths may be larger than the number of times the ILP is able to protect the same number of paths $X_1$. However, this does not invalidate the heuristic because it comes at the price of protecting a larger number of paths $X_2 > X_1$ a fewer number of times. For example, in Figure \ref{SubFig:V_20}, the heuristic was able to protect $X_1 = 2$ paths more than the ILP, but the ILP was able to protect $X_2 = 4$ paths more than the heuristic. 

\subsection{Coding}

The resulting S-$T'$ flow from the heuristic (or the ILP) can be decomposed into two parts; the first, a one-to-many flow from S to the nodes in $X$, and the second is a many-to-one flow from S and the nodes in $X$ to $T'$. The many-to-one flow is not and cannot be coded, since it is composed from the $h$ native data units that are forwarded from S and the nodes in $X$ (possibly after decoding) to $T'$, on $h$ disjoint paths. However, the one-to-many flow from S to the nodes in $X$ can, and should be coded to utilize the extra source connectivity in the most efficient manner. Note that this one-to-many flow is different from normal multicast flow since different data is sent to different nodes. Therefore, a standard multicast network code cannot be used. In fact, the coding in our case is simpler, and needs to be done at a limited number of network nodes as we will show in the following discussion. 

After the heuristic is done and the flow is constructed in the pre-cut portion of the graph. A node $u \in X$ can receive $k+e = Flow_S[u]$ units of flow from S and can send $k = Flow_T[u]$ units of flow to $T'$ (these values were computed in the heuristic). This implies that there are $k+e$ edge-disjoint paths from S to $u$, and $k$ edge disjoint paths from $u$ to $T'$ (or equivalently to T). Note that $k$ represents the number of S-T paths (or data units) going through node $u$, and that $e$ represents the paths used to carry redundant information to $u$.


Let $N_{x_i}$ be the set of 1-hop neighbors of the source on all the $k+e$ paths from S to $x_i$. Assume that all the nodes in $N_{x_i}$ have received the same set of $k$ data units from the source (the $k$ data units on the $k$ S-T paths). To construct a network code that delivers $k+e$ combinations to $x_i$ such that any $k$ of them are solvable using the received data units, we need to assign the proper coding vectors to the nodes in $N_{x_i}$. The coding vectors can be assigned from an $k \times (k+e)$ matrix that has no singular $k \times k$ submatrices, i.e., any $k \times k$ submatrix is invertible. A class of matrices that satisfies this requirement is the Cauchy matrices \cite{FN77}. Therefore, we can simply assign to each node in $N_{x_i}$ a column from a $k \times (k+e)$ Cauchy matrix, such that no two nodes are assigned the same column. 

However, such a coding scheme requires decoding at the nodes in $X$ in each transmission round. An alternative way that will require a fewer number of decoding operations would be to use a systematic code. In a systematic code, $k$ out of the $k+e$ combinations will be trivial combinations, where each of which carries one of the $k$ native data units. In this case, decoding is necessary at a node $x_i \in X$, only if one of the native data units was lost due to a failure on one of the $k$ S-T paths going through node $x_i$. A simple way to do this is presented in \cite{JJ03}. Basically, let $\mc{M}_i$ denote a $k \times (k+e)$ Cauchy matrix with columns representing the coding vectors of the nodes in $N_{x_i}$. We can view $\mc{M}_i$ as two side-by-side matrices $\mc{M}_i = (\mc{M}_{k_i}|\mc{M}_{e_i})$, where $\mc{M}_{k_i}$ is a $k \times k$ matrix , and $\mc{M}_{e_i}$ is a $k \times e$ matrix. Let $\mc{M}'_i$ be the $k \times (k+e)$ matrix resulting from multiplying  $\mc{M}_{k_i}^{-1}$ by $\mc{M}_i$:
\[\mc{M}'_i = \mc{M}_{k_i}^{-1} \times \mc{M}_i = (I_k|\mc{M}_{k_i}^{-1} \times \mc{M}_{e_i}) = (I_k|\mc{M}_{e_i}') = \]
\[
\left[
\begin{array}{cccc|ccc}
1 & 0 & \dots & 0 & \alpha'_{0,k} & \dots & \alpha'_{0,k+e-1} \\
0 & 1 & \dots & 0 & \alpha'_{1,k} & \dots & \alpha'_{1,k+e-1} \\
\vdots & \vdots & \ddots & \vdots & \vdots & \ddots & \vdots \\
0 & 0 & \dots & 1 & \alpha'_{k-1,k} & \dots & \alpha'_{k-1,k+e-1} \\
\end{array} \right]
\]

Since the original matrix $\mc{M}_i$ has no singular submatrices, then the resulting matrix $\mc{M}'_i$ has no singular submatrices also. Note that although the non-singularity property is preserved, the matrix is no longer a Cauchy matrix. Therefore, given that the source has already transmitted the $k$ data units to the nodes in $N_{x_i}$, assigning the columns of $\mc{M}'_i$ to the nodes in $N_{x_i}$ will create $k+e$ combinations such that any $k$ of them are solvable. Moreover, the code is systematic, where out of the $k+e$ combinations there are $k$ trivial combinations, each of which is composed of a single native data unit. 

A special case is when $e=1$. In this case, after the source finishes transmitting the $k$ data units to the nodes in $N_{x_i}$ (where $|N_{x_i}| = k+1$), one of the nodes in $N_{x_i}$ can sum all the received data units and send this sum along with the $k$ native data units on $k+1$ paths to $x_i$.

\section{Post-Cut: Nodes with Extra Destination Connectivity}
\label{Sec:postCut}

Nodes with extra destination connectivity (wEDC) can be found in the post-cut portion of the network only. Nodes wEDC (or a subset of them) can act as post-cut encoding nodes, which create and send redundant combinations to the destination node T to enhance the survivability of the information flow. Note that this case is different from the one considered previously, because all the nodes wEDC are sending their data units to the same destination. Let $hd(u,v)$ denote the head node of edge $(u,v)$, and let $F_T$ be the closest set of nodes wEDC to S, or equivalently the farthest set of node wEDC from T, then $F_T$ contains the nodes in:

\[\bigcup_{i=1}^h hd(C_i)\] where $C_i$ is the cutting edge on path $P_i$. Note that if an edge $C_i$ has $T$ as a head node, then $T \in F_T$, which means that the data unit on the cutting edge $C_i$ is delivered to the destination directly after the cut and cannot be protected. That is, the flow that can be protected from the nodes in $F_T$ is reduced by the number of edges in the cut incident to T. Let $F'_T = F_T \backslash T$, note that $0 \leq |F'_T| \leq h$ (0 when all the nodes in $F_T$ are direct neighbors to T, and $h$ when none of them is a direct neighbor to T), where $h=f^S(T)$. Also, note that since the nodes in $F'_T$ are the head nodes of edges in the min-cut, then we have $f^S(F'_T) = |F'_T|$. That is, each node in $F'_T$ has only one data unit to forward to T, and $|F'_T|$ is the maximum post-cut flow that can be protected. Let $e$ denote the total available extra destination connectivity from the nodes in $F'_T$, then $e$ is calculated as follows \[e = f^{F'_T}(T) - |F'_T|.\] Note that if $F'_T \neq \emptyset$, then $e \geq 1$. If network coding is not allowed, then no more than $e$ data units can be protected. However, if network coding is allowed, we prove that all the data units in $F'_T$ can be protected against at least a single failure:

%
%

\begin{theorem}
Let $F'_T$ be the set of head nodes of the closest min-cut edges to T, such that $T \notin F'_T$. Then if network coding is allowed, the data units at the nodes in $F'_T$ can be protected together against a single failure.  
\end{theorem}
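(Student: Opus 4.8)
The plan is to reduce the statement to the existence of a single linear network code on the post-cut subgraph, and to isolate the one place where the single-cut assumption is essential. I would introduce a virtual super-source $\hat{S}$ joined to every node of $F'_T$, and let $G'$ be the subgraph induced by the post-cut partition $A'$ together with these edges; write $m = |F'_T|$. Each $u_i \in F'_T$ holds exactly one native data unit $d_i$ (because $f^S(F'_T) = m$), and we are given $f^{F'_T}(T) = m + e$ with $e \geq 1$. Since removing a single unit-capacity edge lowers any max-flow by at most one, the slack guaranteed by $e \geq 1$ is what should make one failure survivable, and the informal goal ``all $m$ units survive one failure'' becomes: \emph{exhibit one linear code over $G'$ under which $T$ recovers the whole vector $(d_1,\dots,d_m)$ no matter which single post-cut edge is removed.}

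First I would fix the correct connectivity notion. Treating each $u_i$ as a unit-rate source, the maximum number of linearly independent combinations of $(d_1,\dots,d_m)$ that can reach $T$ equals the max-flow from $\hat{S}$ to $T$ when the edges $(\hat{S},u_i)$ carry unit capacity, namely $\min(m, m+e) = m$. The heart of the proof is the claim that this value remains equal to $m$ after deleting \emph{any} internal post-cut edge $a$. I would prove this by contradiction: if deleting $a$ dropped the flow below $m$, then $a$ would lie in some $m$-edge cut $K$ of $G'$. I would lift $K$ to an $S$-$T$ cut of $G$ by keeping the $h-m$ main-cut edges whose head is $T$, replacing each source edge $(\hat{S},u_j)\in K$ by the corresponding cutting edge $C_j$, and retaining the internal edges of $K$. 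A path-by-path check shows this set disconnects $S$ from $T$ and has size exactly $(h-m)+m = h$, so it is a second min-cut; yet it contains the internal edge $a \notin \bigcup_i C_i$, contradicting the single-cut assumption. I expect this to be the main obstacle, and it is exactly what fails without uniqueness: otherwise some $d_i$ can have a ``private'' downstream bottleneck disjoint from all redundancy, making it unprotectable (the same argument, incidentally, forces $e \geq 1$ whenever $F'_T \neq \emptyset$).

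Given this connectivity guarantee, I would close with the standard algebraic network-coding argument. For the intact graph and for each of the finitely many single-internal-edge deletions, the generic rank of the transfer matrix from $(d_1,\dots,d_m)$ to $T$ equals $m$, so in each scenario the rank-deficient locus is a proper subvariety of the space of coding coefficients. A finite union of proper subvarieties is still proper, so over a sufficiently large field there is a coefficient assignment lying outside all of them simultaneously (equivalently, random linear coding succeeds with high probability). Fixing that single code, any one post-cut failure still leaves $T$ a full-rank system in the $m$ unknowns, so all $|F'_T|$ data units are recovered together; taking the coefficients from a systematic Cauchy construction as in Section~\ref{Sec:Hrstc} additionally keeps decoding trivial in the failure-free case. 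This establishes the theorem.
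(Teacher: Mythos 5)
Your proof is correct, and it is genuinely more self-contained than the paper's, which is only a few lines long: the paper invokes a characterization from \cite{OA09} (a many-to-one flow can be protected against one link failure with network coding if and only if every set of $k$ sources reaches $T$ through at least $k+1$ edge-disjoint paths), and then verifies that condition by contradiction, asserting that a subset $Q$ with $f^Q(T)=|Q|$ would produce $|Q|$ cutting edges and hence a second min-cut. Your argument replaces both ingredients. Combinatorially, you work with the edge-deletion form of the condition (the unit-capacity super-source max-flow stays at $m=|F'_T|$ after removing any one internal edge), which is equivalent to the subset form by max-flow/min-cut duality, and your cut-lifting construction --- the $h-m$ cut edges into $T$, plus $C_j$ for each virtual edge $(\hat{S},u_j)$ in $K$, plus the internal edges of $K$ --- is precisely the step the paper leaves implicit in ``contradicts the assumption of the single min-cut''; spelling it out (every $S$-$T$ path is blocked, by a case analysis on its last crossing into the post-cut partition; the lifted set has size exactly $h$; and it contains $a\notin\bigcup_i C_i$, so it is a \emph{second} min-cut) is the real content of the theorem, and you have it right. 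On the coding side you use the Koetter--M\'edard static-code argument (each failure scenario's rank-deficiency locus is a proper subvariety, and a finite union of proper subvarieties is avoidable over a sufficiently large field) where the paper simply cites \cite{OA09}; your route buys independence from that reference and yields one fixed code covering all single-failure scenarios, while the paper's buys brevity, an exact iff characterization, and a direct bridge to the explicit Cauchy-matrix and coding-tree constructions it develops right after the theorem. One small slip that does not affect correctness: the super-source max-flow is not $\min(m,m+e)$ in general but $\min_{Q\subseteq F'_T}\bigl(m-|Q|+f^Q(T)\bigr)$; its value here is indeed $m$, which follows either by restricting the original $h$ edge-disjoint $S$-$T$ paths to the post-cut portion or by applying your own lifting argument to an arbitrary cut of $G'$, not from that formula.
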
 

\begin{proof}
It was shown in \cite{OA09} that a many-to-one flow, similar to the flow from $F'_T$ to T, can be protected against a single link failure (using network coding) if and only if any subset of $k$ source nodes can reach the common destination node through at least $k+1$ edge-disjoint paths. 

Therefore, to prove the theorem we need to prove that any $k$ nodes in $F'_T$ can reach T through at least $k+1$ edge-disjoint paths. We prove this by contradiction. Assume that there is a set, $Q$, of $k$ nodes in $F'_T$ that can reach T through only $k$ edge-disjoint paths, i.e., $f^Q(T) = k$. Then there are $k$ cutting edges on the $k$ paths from the nodes in $Q$ to T, which contradicts the assumption of the single min-cut. Therefore, any $k$ nodes in $F'_T$ must be able to reach the destination node T through at least $k+1$ link disjoint paths, which concludes the proof.
\end{proof}


If $e=1$ we can use the coding tree approach presented in \cite{OA09}. However, if $e > 1$, then to be able to recover the $|F'_T|$ data units if at most $e$ failures occurred in the post-cut portion of the graph, we need two conditions to be satisfied. First, any set of $k$ nodes in $F'_T$ must be able to reach the destination through at least $k+e$ link-disjoint paths. Second, we need to assign coding vectors to the $f^{F'_T}(T)$ combinations such that any $|F'_T|$ vectors from them are linearly independent. Note that if $e > 1$, then the linear independence of any $|F'_T|$ vectors does not necessarily mean that we can recover the $|F'_T|$ data units from any $|F'_T|$ combinations. This is because, when $e$ is larger than 1, the first condition is not necessarily satisfied. To clarify this issue, consider the example in Figure \ref{Fig:EDC_solv}. In the figure, $F'_T = \{A,B,C,D\}$, $f^{F'_T}(T) = 6$ and $e = 2$. The black nodes represent the 6 paths from $F'_T$ to T, and $c_i$ is the combination carried on $P_i$. The links represent the ability of the nodes in $F'_T$ to reach the different paths. If a path $P_i$ can be reached by $k$ nodes in $F'_T$ then $c_i$ is a function of $k$ data units. Note that since $e=2$, the first condition stated above is not satisfied, because nodes C and D can reach T through only three paths not four. To satisfy the second condition, the coding vectors can be chosen as the columns of a $4 \times 6$ Cauchy matrix. Now consider the four combinations $\{c_1,c_2,c_3,c_4\}$. Since $c_1$, $c_2$ and $c_3$ are functions of only two data units A and B (i.e., the coefficients of C and D are zeros), then the three combinations are linearly dependent. However, note that any two combinations of them are linearly independent, because in a Cauchy matrix any square submatrix has full rank (since it is another Cauchy matrix). That is, although the four combinations are in four data units (because of $c_4$), only three are linearly independent and only two are solvable.

\begin{figure}[tbh]
\centering
\includegraphics*[scale = 0.35]{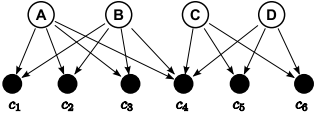}
\caption{$F'_T = \{A,B,C,D\}$, $f^{F'_T}(T) = 6$, and $e = 2$. The combinations $c_1$, $c_2$, and $c_3$ are functions of A and B. The combinations $c_5$ and $c_6$ are functions of C and D. Combination $c_4$ is a function of A, B, C and D. The set $Q = \{c_1,c_2,c_3,c_4\}$ has 3 linearly independent combinations, from which only two can be solved to recover A and B}
\label{Fig:EDC_solv}
\end{figure}
%

If each node $v \in F'_T$ has $f^v(T)$ paths to T that are link disjoint from the paths from all other nodes in $F'_T$ to T, network coding will not be necessary and each node in $F'_T$ can send $f^v(T)$ copies of its data on its $f^v(T)$ paths to T. However, network coding becomes necessary if the paths from the nodes in $F'_T$ to T share links. The first links to be shared are in the link-cut between $F'_T$ and $T$ that is closest to $F'_T$. 
%
%

Let $f^{F'_T}(T) = n$, then there are $n$ edge-disjoint paths $\{P'_1, \dots, P'_n\}$ from $F'_T$ to $T$. Let $C'_i$ denote the cutting edge on path $P_i$ from a node in $F'_T$ to $T$ that is closest to $F'_T$ (if path $P'_i$ has more than one cutting edge). Recall that $C'_i$ is a cutting edge only if the maximum achievable $F'_T$-T flow is reduced by 1. Let $Z$ be the set of coding nodes, which contains the tail nodes of all the $n$ cutting edges as follows:\[\bigcup_{i=1}^nC'_i\]

%

Note that $|Z| \leq n$, and that network coding is not necessary at any of the downstream nodes after $Z$, since the combinations created at the nodes $Z$ will be forwarded to T on $|F'_T|$ edge-disjoint paths. Let $|F'_T| = m$, then a network code can be constructed by assigning each edge $C'_i$, where $1 \leq i \leq n$, a distinct column from an  $m \times n$ Cauchy matrix. The solvability of any $m$ combinations depends on how the nodes in $F'_T$ are connected to $T$ as shown in the previous example. Specifically, let $r$ be the minimum number of solvable combinations in any $m$ combinations, and let $q$ denote the number of failures in the post-cut portion of the graph. Then we are guaranteed the full recovery of the $m$ data units if $q=1$ (by Theorem 2), and we are guaranteed the partial recovery of at least $r$ data units if $q = e$ (by the definition of $r$).

\section{Conclusions}
\label{Sec:conc}
We presented a new protection approach, called max-flow protection, which can enhance the survivability of the whole S-T max-flow. The basic idea is not to protect links in the min-cut, but try to protect all other links if possible. We divided the problem into two problems; pre-cut protection and post-cut protection. Pre-cut protection is NP-hard. Therefore, the problem is formulated as an ILP, and a heuristic is proposed to solve it. We showed that all data units that are not delivered directly to T after the min-cut can be post-cut-protected. Finally, simple network codes are proposed to maximize the number of pre- and post-cut protected paths.

\bibliographystyle{unsrt}
\bibliography{main}

\begin{thebibliography}{10}

\bibitem{SJ01}
S.~Vutukury and J.J.~Garcia Luna-Aceves.
\newblock Mdva: A distance-vector multipath routing protocol.
\newblock In Proceedings of the INFOCOM, 2001.

\bibitem{MS01}
M.~K. Marina and S.~R. Das.
\newblock On-demand multipath distance vector routing in ad hoc networks.
\newblock In Proceedings of ICNP 2001.

\bibitem{RR00}
R.~Ahlswede, N.~Cai, S.~R. Li, and R.~Yeung.
\newblock Network information flow.
\newblock IEEE Trans on. Info. Thry. Vol 46, No. 4, July 2000.

\bibitem{A09}
A.~E. Kamal.
\newblock 1+n network protection for mesh networks: Network coding-based
  protection using p-cycles.
\newblock To appear in IEEE/ACM Trans on. Net.

\bibitem{OA09}
O.~M. Al-Kofahi and A.~E. Kamal.
\newblock Network coding-based protection of many-to-one wireless flows.
\newblock IEEE JSAC, VOL. 27, NO. 5, JUNE 2009.

\bibitem{OA08}
O.~Al-Kofahi and A.~Kamal.
\newblock Scalable redundancy for sensors-to-sink communication.
\newblock In the Proceeding of Globecom 2008.

\bibitem{AC07}
A.~Sprintson, S.Y.E. Rouayheb, and C.N Georghiades.
\newblock Robust network coding for bidirected networks.
\newblock Info. Thry. and Apps. Workshop, 2007.

\bibitem{CA04}
C.~Chekuri and A.~Kumar.
\newblock Maximum coverage problem with group budget constraints and
  applications.
\newblock In proceedings of APPROX 2004, Lecture Notes in Computer Science
  3122, 72–83.

\bibitem{FN77}
F.~J. MacWilliams and N.~J.~A. Sloane.
\newblock The theory of error-correcting codes.
\newblock North Holland, 1977.

\bibitem{JJ03}
J.~Lacan and J.~Fimes.
\newblock A construction of matrices with no singular square submatrices.
\newblock In proceedings of the 7th International Conference on Finite Fields
  and Applications, May 2003.

\end{thebibliography}

\end{document}